\documentclass[sigconf, authorversion, nonacm]{acmart}

\usepackage[utf8]{inputenc}

\acmYear{2022}
\setcopyright{acmlicensed}\acmConference[SPAA '22]{Proceedings of the 34th ACM Symposium on Parallelism in Algorithms and Architectures}{July 11--14, 2022}{Philadelphia, PA, USA}
\acmBooktitle{Proceedings of the 34th ACM Symposium on Parallelism in Algorithms and Architectures (SPAA '22), July 11--14, 2022, Philadelphia, PA, USA}
\acmPrice{15.00}
\acmDOI{10.1145/3490148.3538573}
\acmISBN{978-1-4503-9146-7/22/07}

\keywords{contention resolution; randomized backoff; coded networks; scheduling}

\begin{CCSXML}
<ccs2012>
   <concept>
       <concept_id>10003752.10003809.10010047</concept_id>
       <concept_desc>Theory of computation~Online algorithms</concept_desc>
       <concept_significance>300</concept_significance>
       </concept>
   <concept>
       <concept_id>10003752.10003809</concept_id>
       <concept_desc>Theory of computation~Design and analysis of algorithms</concept_desc>
       <concept_significance>500</concept_significance>
       </concept>
   <concept>
       <concept_id>10003752.10003809.10010172</concept_id>
       <concept_desc>Theory of computation~Distributed algorithms</concept_desc>
       <concept_significance>500</concept_significance>
       </concept>
   <concept>
       <concept_id>10003033.10003068</concept_id>
       <concept_desc>Networks~Network algorithms</concept_desc>
       <concept_significance>500</concept_significance>
       </concept>
 </ccs2012>
\end{CCSXML}

\ccsdesc[300]{Theory of computation~Online algorithms}
\ccsdesc[500]{Theory of computation~Design and analysis of algorithms}
\ccsdesc[300]{Theory of computation~Distributed algorithms}
\ccsdesc[500]{Networks~Network algorithms}

\usepackage[normalem]{ulem} 
\usepackage{amsmath}

\usepackage{amsthm}
\usepackage{amsfonts}
\usepackage{xcolor}
\usepackage{enumitem}
\usepackage{xspace}
\usepackage[normalem]{ulem} 

\definecolor{darkgreen}{rgb}{0,0.5,0}
\usepackage{hyperref}
\hypersetup{
    unicode=false,          
    colorlinks=true,        
    linkcolor=blue,          
    citecolor=darkgreen,        
    filecolor=magenta,      
    urlcolor=cyan           
}
\usepackage[capitalize,nameinlink]{cleveref}
\crefformat{footnote}{#2\footnotemark[#1]#3}

\newcommand{\fabian}[1]{}
\newcommand{\mab}[1]{}
\newcommand{\mabinline}[1]{}
\newcommand{\john}[1]{}
\newcommand{\seth}[1]{}
\newcommand{\ftodo}[1]{}
\newcommand{\fixme}[1]{}
\newcommand{\TODO}{}

\renewcommand{\paragraph}[1]{\vspace{0.09in}\noindent{\bf \boldmath #1}} 

\newtheorem{theorem}{Theorem}

\newtheorem{lemma}[theorem]{Lemma}
\newtheorem{corollary}[theorem]{Corollary}

\newtheorem{definition}{Definition}

\newcommand{\defn}[1]{{\textit{\textbf{\boldmath #1}}}}

\renewcommand{\Pr}{\mathbb{P}}

\newcommand{\threshold}{\kappa}

\newcommand{\poly}[1]{\text{\rm poly}\ensuremath{(#1)}}
\newcommand{\pmin}{p_{\rm min}}
\newcommand{\sysname}{Decodable Backoff Algorithm\xspace}


\newcommand{\secref}[1]     {Section~\ref{sec:#1}}

\newcommand{\thmref}[1]     {Theorem~\ref{thm:#1}}

\newcommand{\lemref}[1]     {Lemma~\ref{lem:#1}}
\newcommand{\lemreftwo}[2]  {Lemmas \ref{lem:#1} and~\ref{lem:#2}}

\newcommand{\corref}[1]     {Corollary~\ref{cor:#1}}
\renewcommand{\eqref}[1]      {Equation~(\ref{eq:#1})}

\newcommand{\defref}[1]     {Definition~\ref{def:#1}}

\renewcommand{\epsilon}{\varepsilon}


\begin{document}

\title{Contention Resolution for Coded Radio Networks}

\author{Michael A. Bender}
\affiliation{%
  \institution{Stony Brook University}
  \city{Stony Brook}
  \state{New York}
  \country{USA}
}
\email{bender@cs.stonybrook.edu}

\author{Seth Gilbert}
\affiliation{%
  \institution{National University of Singapore}
  \country{Singapore}
}
\email{seth.gilbert@comp.nus.edu.sg}

\author{Fabian Kuhn}
\affiliation{%
  \institution{University of Freiburg}
  \city{Freiburg}
  \country{Germany}
}
\email{kuhn@cs.uni-freiburg.de}

\author{John Kuszmaul}
\affiliation{%
  \institution{Yale University}
  \city{New Haven}
  \state{Connecticut}
  \country{USA}
}
\email{john.kuszmaul@yale.edu}

\author{Muriel M\'edard}
\affiliation{%
  \institution{Massachusetts Institute of Technology}
  \city{Cambridge}
  \state{Massachusetts}
  \country{USA}
}
\email{medard@mit.edu}



\begin{abstract}
Randomized backoff protocols, such as exponential backoff, are a powerful tool for managing access  to  a  shared  resource, often a wireless communication channel (e.g.,~\cite{802.11-standard}). For a wireless device to transmit successfully, it uses a backoff protocol to ensure exclusive access to the channel. Modern radios, however, do not need exclusive access to the channel to communicate; in particular, they have the ability to receive useful information even when more than one device transmits at the same time. These capabilities have now been exploited for many years by systems that rely on interference cancellation, physical layer network coding and analog network coding to improve efficiency. For example, Zigzag decoding~\cite{GollakotaKa08} demonstrated how a base station can decode messages sent by multiple devices simultaneously. 

In this paper, we address the following question: \emph{Can we design a backoff protocol that is better than exponential backoff when exclusive channel access is not required}. We define the Coded Radio Network Model, which generalizes traditional radio network models (e.g.,~\cite{ChlamtacKu85}). We then introduce the \sysname, a randomized backoff protocol that achieves an optimal throughput of $1-o(1)$. (Throughput $1$ is optimal, as simultaneous reception does not increase the channel capacity.)  The algorithm breaks the constant throughput lower bound for traditional radio networks~\cite{GoldbergJeKa00,GoldbergJeKa04,Goldberg-notes-2000}, showing the power of these new hardware capabilities.
\end{abstract}

\maketitle

\sloppy


\section{Introduction}
\label{sec:intro}
 
Randomized \defn{backoff} protocols~\cite{Goldberg-notes-2000,HastadLeRo96,MetcalfeBo76,Abramson70, AbramsonKu73,802.11-standard},
such as exponential backoff~\cite{MetcalfeBo76}, are a powerful tool for managing access to a shared resource, often a communication channel.
These protocols are used prominently in wireless networks from AlohaNet~\cite{Abramson70, AbramsonKu73,Binder75,roberts:aloha} to 802.11~\cite{802.11-standard} and beyond~\cite{CasiniGH07}.
Wireless devices typically need to transmit their messages to a base station, and in order for a device's transmission to be successful, the device must have \defn{exclusive access} to the channel, that is, no other wireless device can be transmitting.
If more than one device transmits simultaneously, there is a \defn{message collision} and the base station receives indecipherable noise~\cite{Goldberg-notes-2000,kurose:computer,xiao:performance,DBLP:journals/jacm/GoldbergMPS00,10.1145/28395.28422,MetcalfeBo76}.

The problem of designing backoff protocols is traditionally formalized as the \defn{contention-resolution} problem.  The devices correspond to  agents that arrive over time, each with a message that needs to be transmitted.  To perform the transmission, the device requires exclusive  access to the channel~\cite{ChlamtacKu85,Goldberg-notes-2000,kurose:computer,xiao:performance,DBLP:journals/jacm/GoldbergMPS00,10.1145/28395.28422,MetcalfeBo76,Abramson70, AbramsonKu73}. 
Time is subdivided into synchronized \defn{slots}, which are sized to fit a message, and the objective is to maximize the \defn{utilization} or \defn{throughput} of the channel, which is roughly (but not exactly) the fraction of slots that successfully transmit messages.
For example, traditional binary exponential backoff yields $\Theta(1/\log{n})$ throughput with adversarial packet injection~\cite{BenderFaHe05}; for stochastic packet injections, there is much work analyzing the arrival rates under which binary exponential backoff is stable or unstable~\cite{HastadLeRo87,Aldous87,Goodman:1988:SBE:44483.44488}.\footnote{In fact, there is much work on stochastic arrival models~\cite{MetcalfeBo76,Goldberg-notes-2000,10.1145/28395.28422,10.1145/28395.28422,goldberg2004bound,shoch1980measured,abramson1977throughput,kelly1987number,aldous:ultimate,Goodman:1988:SBE:44483.44488,Goodman:1988:SBE:44483.44488,mosely1985class,DBLP:journals/siamcomp/RaghavanU98,DBLP:conf/stoc/RaghavanU95,GOLDBERG1999232}.}  Newer backoff protocols can achieve $\Theta(1)$ throughput~\cite{BenderKoKu20,
BenderFiGi19,
BenderFiGi16a,
BenderKoPe18,
BenderKoPe16,
ChangJiPe19,
awerbuch:jamming} even under adversarial packet injection.

\paragraph{Modern devices do not need exclusive access to the channel to broadcast.}  
An exciting development over the last decade has been improvements in modern radio hardware, in particular, the ability to receive useful information when more than one device transmits at the same time~\cite{CasiniGH07,GollakotaKa08,KattiRHKMC06,KattiRHKMC08,EbrahimiLK20,TehraniDN11,Andrews05,HalperinAAW07,KattiGK07}.  This information is not sufficient to immediately recover the messages sent; however, when a base station receives enough information about a set of messages, it can then decode the original messages. 
The point is that on modern hardware, backoff protocols do not need to guarantee exclusive access to the channel, because exclusive access is no longer necessary.

In fact, these capabilities have now been exploited for many years by systems that rely on interference cancellation~\cite{CasiniGH07,TehraniDN11,Andrews05,HalperinAAW07}, physical layer network coding~\cite{EbrahimiLK20,KattiRHKMC06,KattiRHKMC08,FengSK13}, and analog network coding~\cite{KattiGK07}  to improve efficiency.  For example, the celebrated paper on Zigzag decoding~\cite{GollakotaKa08} showed how a base station can decode messages sent by multiple devices simultaneously.  (In one scenario cited in their paper, this reduced packet loss from approximately 72\% to less than 1\%!) 

ZigZag decoding was built on top of 802.11, i.e., \mab{can we describe this `` industrial-strength exponential backoff''?}
exponential backoff; it showed that, in practice, improved signal-decoding techniques can yield significant improvements, even while using traditional exponential backoff for contention resolution.

\paragraph{Coding-compatible contention resolution.}
In this paper, we address the following question, taking an algorithmic approach: Can we design a backoff protocol that is better than exponential backoff for this setting where exclusive channel access is not required, that is, when devices can have simultaneous transmissions?  To do so, we define the \defn{Coded Radio Network Model} (see \secref{model}), which generalizes traditional radio network models~\cite{ChlamtacKu85} 
(where devices \emph{do} need exclusive access to the channel to transmit).  
We then ask: what is the best throughput that can be achieved?

To see an example of the throughput achieved by some standard approaches, consider the case where the base station can only decode a message if there is exactly one transmitter.  Then, if there are $n$ devices and each device broadcasts in each round with probability $1/n$, the system achieves throughput $1/e$ (and many consider this to be the ideal achievable throughput in many situations). With binary exponential backoff, if $n$ devices begin the protocol at the same time, it takes $\Theta(n \log{n})$ time for all the packets to complete, and the throughput achieved is $1/\log{n}$~\cite{BenderFaHe05}. Recent work has shown how to achieve $\Theta(1)$ throughput even in an adversarial setting~\cite{BenderFiGi19,ChenJiZh21,ChangJiPe19}, and Chang, Jin, and Pettie~\cite{ChangJiPe19} showed how to achieve throughput $1/e - O(\epsilon)$, for any $\epsilon > 0$.  

At the other end of the spectrum, several papers have proved lower bounds that establish the best achievable throughput in different situations.  For example, in a ``full sensing'' model, no protocol can achieve better than 0.568 throughput~\cite{TsybakovLi87,Goldberg-notes-2000}; in an ``acknowledgment-based'' model, no protocol can achieve better than 0.530045 throughput~\cite{GoldbergJeKa00,GoldbergJeKa04,Goldberg-notes-2000}.

This raises the question: \emph{what is the best achievable throughput in a system that uses modern techniques to decode simultaneous signals?}
If we can send an arbitrary number of messages simultaneously, it may seem as if there is no inherent limit.  This is not the case.  There are two main reasons, one pragmatic and one information theoretical.  First, from a practical perspective, there is a limit to the number of simultaneous signals that a base station can usefully interpret.  For each signal received, there is noise, fading, frequency offset, inter-symbol interference, etc., all of which means that the more simultaneous signals, the less useful information the base station receives (see, e.g.,~\cite{TseV05}, Chapter 3). Thus, there is some \defn{decoding threshold} $\kappa$ where if there are more than $\kappa$ simultaneous transmissions, the base station learns nothing useful. A key implication here is that there is a maximum achievable throughput, and contention resolution is still important---even though we no longer need to 
guarantee exclusive channel access for the transmitter.

From the perspective of information theory, there is an even stronger limitation: none of the techniques we are considering (e.g., iterative interference cancellation) has increased the actual capacity of the communication channel.\footnote{Typically, the capacity of a wireless channel depends on other factors, e.g., the transmission frequency, the modulation scheme, etc.  As we do not want to address those issues in this paper, we simply define one time slot as the time it takes to send a single message using the system at hand.} It simply provides a method for using that capacity more efficiently. For example, if two messages are being transmitted simultaneously, it will then take twice as long to receive and decode them. This is observed in practice in ZigZag Decoding~\cite{GollakotaKa08}, where the authors state: ``when senders collide, ZigZag attains the same throughput as if the colliding packets were a priori scheduled in separate time slots.''  The conclusion, then, is that the maximum achievable throughput is still $1$.

\paragraph{Results.}
In this paper, we develop a randomized backoff protocol designed for a system that can support multiple simultaneous transmissions. Our model captures the fact that the base station can only receive useful information if there are $\leq \kappa$ simultaneous transmissions, and it reflects the fact that (conceptually) if $\ell$ messages are transmitted simultaneously, the base station only receives roughly $1/\ell$ of each original message. The new algorithm achieves optimal throughput arbitrarily close to $1$, approaching $1$ as $\kappa$ grows. More precisely, it achieves a throughput of $1 - \Theta(1/\log{\kappa})$.  (And in doing so, it shows that the increased power of simultaneous transmission is non-trivial, as it breaks the constant-throughput lower bounds~\cite{TsybakovLi87,GoldbergJeKa00,GoldbergJeKa04,Goldberg-notes-2000}.)

The basic idea of the protocol is relatively simple: it uses a backon-backoff mechanism to construct groups of messages that will broadcast simultaneously until they are decoded. The backoff-backon mechanism is tuned to yield groups of size at most (the decoding threshold) $\kappa$, and a simple ``admission control'' mechanism prevents newly arrived packets from disrupting the ongoing protocol. (Ensuring that the same packets broadcast together repeatedly is also critical to protocols like ZigZag Decoding~\cite{GollakotaKa08}---it is the repeated overlaps that make simultaneous decoding easier.)
Our analysis relies on a potential-function argument, which shows that, at any given time, we are either decoding groups of packets or moving closer to a good group structure.

The key challenge  is getting throughput $1 - o(1)$.   Even wasting a few rounds to perform backoff may render our goals impossible to achieve.  For example, if there are $x$ packets injected into the system, and we spend some $\Theta(x)$ rounds running backoff to construct batches of size $\kappa$, then our throughput will not be close to $1$.  Indeed, a classical backoff protocol like exponential backoff~\cite{MetcalfeBo76} spends at least $\Theta(x)$ time even just growing its window to size $\Theta(x)$.  Thus, in both the protocol design and the analysis, we have to be particularly careful to prevent constant factors from creeping in!

To show good throughput, we focus on the \emph{rate} at which packets can be added to the system without creating a large backlog of undelivered packets. If packets arrive at a rate equal to the channel capacity, and yet there is never a backlog, then we conclude that the backoff protocol is delivering a throughput equal to the channel capacity.  (We are not interested in executions where there are very few packets---the throughput may look low simply because there is nothing to do.)  We allow for arbitrary adversarial packet arrival, i.e., packet arrivals can be bursty, smooth, or whatever.  (There is no assumption of random packet arrivals.)  

\mab{Do we want to mention that we have 
 models from adversarial queueing theory (AQT)~\cite{borodin2001adversarial,cruz-one,cruz-two,andrews2001universal}?}
 
To show that we achieve throughput arbitrarily close to $1$, we assume that for some window size $w$, there are at most $(1 - o(1))$ packet arrivals in every window of size at least $w$, i.e., at some level of granularity the arrivals do not exceed the channel capacity.  We show that in this case, there is never \john{whp} a large backlog of packets in the system, i.e., the system continues to deliver packets. In addition, we show that there is no starvation, i.e., that packets do not stagnate in the system too long and will eventually be delivered.
\begin{theorem}
Suppose that in every window of size $w$, for some $w \geq 16\threshold^2$, there are at most $\big(1 - \frac{5}{\ln \threshold}\big)w$ packet arrivals. Then the \sysname 
guarantees the following two conditions in the Coded Radio Network Model:\\
(i) At any given time, the number of packets in the system is at most~$2w$, with probability $\geq 1 - 1/\poly{w}$.\\
(ii) A given packet is delivered within $O(\sqrt{\kappa} w \ln^3{w})$ time slots, with probability $\geq 1 - 1 / \poly{w}$.
\label{thm:main}
\end{theorem}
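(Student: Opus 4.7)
The plan is to combine a potential-function analysis of the backon-backoff mechanism with an amortized throughput accounting, and then deduce the two conclusions from a drift argument against the arrival rate.

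First, I would set up the bookkeeping. Separate the packets into \emph{active} packets (those currently participating in backon-backoff) and \emph{waiting} packets (those held by the admission-control gate). The admission control should enforce the invariant that, at any time, the active set has size at most $O(\kappa)$ and is ``stable'' in the sense that its transmission probability $p$ lies close to the target $\Theta(1/\kappa)$ whenever new packets are released. I would state this invariant explicitly and verify it directly from the algorithm description before turning to probabilistic arguments. The point is that admission control decouples the analysis: the dynamics of the active set depend only on the current active set and on the backon-backoff parameters, not on the (possibly very large) waiting queue.

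Second, I would define a potential $\Phi$ on the algorithm state that measures how far the pair (active size $a$, transmission probability $p$) is from a ``decodable configuration'' $a\leq\kappa$ with $ap$ near $1$. The key step is a drift lemma: in every slot, either a batch of $\ell\le\kappa$ active packets is successfully decoded (delivering $\ell$ packets in the next $\ell$ slots, i.e.\ throughput $1$ on that interval), or $\E[\Phi_{t+1}-\Phi_t]\le -1/\poly(\ln\kappa)$. The reason to expect this to go through is exactly the ``backon-backon'' tuning: when the active set is too contended, backoff multiplicatively reduces $p$; when it is too quiet, backon multiplicatively increases $p$; both moves change $\ln p$ by a small additive amount, so $\Phi$ behaves as a biased random walk with small steps that concentrate by Azuma. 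Summing over a window $W$ of length $|W|\ge w\ge 16\kappa^2$, at most a $\Theta(1/\ln\kappa)$ fraction of slots in $W$ can be ``adjustment'' slots rather than ``decoding'' slots, so the number of packets delivered in $W$ is at least $(1-O(1/\ln\kappa))|W|$ w.h.p.\ in $w$.

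Third, I would prove part (i) by a queueing drift argument. Let $Q_t$ be the total number of packets in the system (active plus waiting) at time $t$. By hypothesis, arrivals in any length-$w$ window are at most $(1-5/\ln\kappa)w$, while the throughput bound from the previous paragraph says deliveries in any length-$w$ window with $Q\ge w$ are at least $(1-O(1/\ln\kappa))w$. Choosing the constants inside the $O(\cdot)$ small enough (this is precisely what the gap $5/\ln\kappa$ is calibrated for), each such window produces negative expected drift of at least $\Omega(w/\ln\kappa)$. A standard supermartingale/Chernoff argument on a telescoping sum of windows then gives $\Pr[Q_t>2w]\le 1/\poly(w)$, uniformly in $t$. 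For part (ii), I would track a tagged packet $P$ from arrival. By part (i), at every moment after $P$'s arrival the queue holds at most $2w$ packets, so $P$ becomes active after at most $O(w/\kappa)$ admission events; once active, $P$ belongs to a batch that is decoded within an expected $O(\sqrt{\kappa}\,\mathrm{polylog}(w))$ slots by the potential drift (the $\sqrt{\kappa}$ factor comes from the multiplicative step size of the backon-backoff tuning needed to hit the sweet spot $p\approx 1/\kappa$). Composing the two time bounds and boosting to high probability with a union bound over $\poly(w)$ rounds yields $O(\sqrt{\kappa}\,w\ln^3 w)$.

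The hard part will be the decoding-vs-adjustment bookkeeping behind the drift lemma. Constant-factor losses anywhere are fatal: if the adversary repeatedly drives the active set to size slightly above $\kappa$, every such overshoot costs a full batch of slots, and even a constant-fraction waste would kill the $1-o(1)$ throughput target. Making this work requires showing that the potential cannot linger near the ``just-over-$\kappa$'' boundary---i.e., that the multiplicative backoff step pulls $p$ down by an amount comparable to the overshoot, so the \emph{amortized} number of wasted slots per successfully delivered packet is $O(1/\ln\kappa)$ rather than $\Theta(1)$. I expect this to be where most of the technical work lives, and where the admission-control invariant (which bans new packets from perturbing the active set mid-adjustment) is indispensable.
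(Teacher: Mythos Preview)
Your plan has the right high-level shape (potential drift to get $1-o(1)$ throughput, then a separate latency argument), but it rests on a mistaken picture of the admission-control mechanism, and that error propagates through both parts.

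You assume admission control ``enforces the invariant that, at any time, the active set has size at most $O(\kappa)$,'' and that this decouples the active dynamics from the waiting queue. That is not what the algorithm does. Admission control is simply: every inactive packet activates upon hearing a \emph{single} silent slot. So if the adversary has deposited $\Theta(w)$ packets and a silent slot occurs, all of them activate at once; the active set can be of size $\Theta(w)\gg\kappa$. There is no cap. The analysis therefore cannot treat the active set as a small, isolated subsystem; it must track an active set of arbitrary size whose aggregate contention $c_t=\sum_i p_i$ is driven toward the target $c_*=\sqrt{\kappa}$ (not $ap\approx 1$ with $p\approx 1/\kappa$ as you wrote). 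This is why the paper's potential has the form $\Phi(t)=N_t+\max\{0,\,4\kappa\log_\kappa(c_t/c_*)\}+4\log_\kappa(1/\pmin)+5M_t/\ln\kappa$: the first term is the raw packet count (so $\Phi\le 2w$ immediately gives part (i)), the second term measures overshoot in contention (not in active-set size), and the fourth term is a bank that pays for the contention jump when a mass activation happens. Your potential on $(a,p)$ has no $N_t$ term, so even if you proved a drift bound for it, it would not bound the number of packets in the system.

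The latency sketch also breaks. You write that once $P$ is active it ``belongs to a batch that is decoded within $O(\sqrt{\kappa}\,\mathrm{polylog}(w))$ slots.'' But packets do not belong to persistent batches; each epoch, every active packet independently joins with its current $p_j$, and joining a successful epoch is the only way to be delivered. The $\sqrt{\kappa}$ in the bound does not come from the step size of the multiplicative update; it comes from the fact that, at the moments one can argue about (what the paper calls \emph{sparse system events}: contention $<\kappa^{1/4}$ and $\pmin\ge 1/\sqrt{\kappa}$), the tagged packet joins---and hence succeeds---with probability only $\ge 1/\sqrt{\kappa}$. The paper shows such events recur every $O(w\ln^2 w)$ slots w.h.p.\ (via the potential hitting $\le 6\kappa$), and then needs $O(\sqrt{\kappa}\ln w)$ of them for $P$ to succeed; composing gives $O(\sqrt{\kappa}\,w\ln^3 w)$. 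Your ``$O(w/\kappa)$ admission events then quick decode'' route does not match the algorithm's actual dynamics and would not yield the stated bound.
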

\noindent These results are proved in \thmref{throughput} and \thmref{latency}.

One thing to note is that the window size $w$ is not known to the algorithm---it is simply an analytic tool. Thus the execution can have entirely arbitrary packet arrivals and the theorem will hold if there exists some $w$ for which the arrival rate is satisfied. Another note is that, despite the high probability bounds, the execution is not limited to polynomial lengths; the claims will hold for any time $t$, even in a very long execution.  Finally, observe that for packet latency, the best possible result is $O(w)$: if the adversary injects $w$ packets in round $1$, it will necessarily take at least $w$ rounds for all of them to complete. (In fact, we consider the batch case in \thmref{batch:latency}.)

\paragraph{Other related work.}  %
There is a long history of algorithmic research analyzing the performance of backoff protocols, looking at different aspects. As in this paper, the primary goal has been to analyze packet delivery rates, whether in the form of throughput, total time, or stability~\cite{ChangJiPe19,
MetcalfeBo76,
BenderFaHe05,
bender:heterogeneous,
hastad:analysis,
GoodmanGrMaMa88,
RaghavanUp99,
GOLDBERG1999232,
hastad:analysis,
goldberg:contention,
bianchi:performance,
song:stability,
DEMARCO20171,
DeMarcoSt17}. %
An alternative metric has been looking at how long it takes to deliver \emph{one} message~\cite{jurdzinski2015cost,
jurdzinski2005probabilistic,
jurdzinski2002probabilistic,
jurdzinski2002probabilistic,
farach2007initializing,farach2015initializing,
Chlebus:2016:SWM:2882263.2882514,
ChrobakGK07,
ChlebusGKR05,
DBLP:journals/dam/MarcoPS07,
willard:loglog,fineman:contention,
fineman:contention2}. 

Backoff has been studied in a wide variety of different models.  Many of the above papers have assumed that radios can detect collisions; some work has focused on the case where there is no collision detection~\cite{BenderKoKu20,ChenJiZh21,DeMarcoKoSt21,DeMarcoSt17}. The large majority of these protocols have been randomized, but there has been some consideration of deterministic protocols~\cite{doi:10.1137/140982763,
DeMarcoSt17,
DBLP:conf/icdcs/MarcoKS19,
Kowalski05}.  A few papers have focused on more realistic wireless models that capture signal-to-noise ratio~\cite{FinemanGKN16,FinemanGKN19}. There has been some significant recent work on designing backoff protocols that are robust to noise and/or jamming~\cite{awerbuch:jamming,richa:jamming2,
richa:competitive-j, richa:jamming4,
richa:jamming3,
DBLP:journals/ton/RichaSSZ13,
ogierman:competitive,
ogierman2018sade,
DBLP:journals/tcs/AntaGKZ17,
anantharamu2019packet,anantharamu2011medium,
Chlebus:2016:SWM:2882263.2882514,
BenderFiGi19}.
A critical modelling question is how packets are injected, e.g., in batches~\cite{GoldbergJeLeRa97, BenderFaHe05,BenderFiGi06,
  Gereb-GrausTsa92,GreenbergFlLa87,DBLP:journals/jacm/GreenbergW85,willard:loglog,FernandezAntaMoMu13}, stochastically~\cite{MetcalfeBo76,10.1145/28395.28422,10.1145/28395.28422,goldberg2004bound,shoch1980measured,abramson1977throughput,kelly1987number,aldous:ultimate,Goodman:1988:SBE:44483.44488,Goodman:1988:SBE:44483.44488,mosely1985class,DBLP:journals/siamcomp/RaghavanU98,DBLP:conf/stoc/RaghavanU95,GOLDBERG1999232}, or via an adversarial queuing theory model~\cite{BenderFaHe05,
  GreenbergFlLa87,willard:loglog,GoldbergJeLeRa97,GoldbergMaRa99,BenderFiGi06,FernandezAntaMoMu13,ChlebusKoRo12,ChlebusKoRo06,anantharamu:adversarial-opodis}.
There has also been a variety of work looking at variations, e.g., the packets may have deadlines~\cite{AgrawalBeFi20}, or the packets may have different sizes~\cite{BenderFiGi06}. 

Our algorithm is built on the multiplicative-weight~\cite{AroraHK12} approach, which has long been used for contention resolution, e.g., in TCP congestion control~\cite{ChiuJ89}.  More recently, Awerbuch, Richa, and Scheideler~\cite{awerbuch:jamming} 
used a multiplicative-weight approach to solve backoff in the presence of jamming and noise.  And Chang, Jin, and Pettie~\cite{ChangJiPe19} used multiplicative-weights updates to achieve $1/e - o(1)$ throughput.



\section{Coded Radio Network Model}
\label{sec:model}

The goal of this section is to develop the Coded Radio Network Model.
We begin with background on how radios transmit and receive signals, before continuing to present the model itself. 

\subsection*{Physical-layer radios}

Here we briefly describe how physical-layer radios work.  (See, e.g.,~\cite{TseV05}, for more details.)
The packet to be sent is broken up into smaller pieces. Each of these pieces is then ``modulated,'' i.e., mapped to a complex number. The mapping depends on the modulation scheme, which impacts the transmission rate.  For example, a simple binary encoding scheme might map a $1$ to $e^{0i} = (1,0)$ and $0$ to $e^{\pi i} = (-1, 0)$.  These complex numbers correspond to waveforms which the radio transmits.  (For example, in ``phase-shift keying,'' the complex number determines the \emph{phase} of the wave-form; for the binary encoding above, the phase is 0 to transmit a 1 and $\pi$ to transmit a 0.)

The receiver can observe the waveforms, which have been partially corrupted during the transmission: the signals are attenuated (i.e., the amplitude reduced), the phase has shifted based on the distance between the transmitter and receiver, there is additional noise, etc.  Despite these challenges, modern receivers are designed to synchronize with the transmitter, estimate the various error parameters, and transform the waveform back into a sufficiently close approximation of the original complex number.  These are then demodulated, and the packet reconstructed.

A notable aspect is that the process is \emph{additive}: if two complex number are transmitted simultaneously, the receiver is able to decode the \emph{sum} of the original complex numbers (each partially corrupted, as before, and with noise added).  This sum can provide useful information about the original messages; for example, if you already know one of the messages, you can subtract it from the sum and recover the second message.  This is famously exploited in systems that rely on interference cancellation, e.g.,~\cite{KattiRHKMC06,KattiRHKMC08,Andrews05,HalperinAAW07}.

Too many simultaneous transmissions, however, will make this demodulation increasingly difficult because of the noise and signal corruption that is added while the messages are in flight.

\subsection*{Our model}

In this section, we describe the Coded Radio Network Model, which models radios that can receive multiple packets transmitted simultaneously, while respecting the information-theoretic capacity of the channel and the decoding limits inherent to modern hardware.
\mab{Ok to use  ``message'' and ``packet'' or switch just to ``packet''?}

We assume that time is divided into (synchronized) slots, each of which is sufficiently large to send a single packet.
At any given time, there are a collection of packets to be sent in the system. 
(Our discussion is going to focus on the \emph{packets}, not the devices sending them, which may lead to some strange elocutions in which a message may seem to have agency in transmitting itself.) 

There is a base station that is intended to receive all the packets. In a more traditional contention-resolution model, the base station would receive a packet that was broadcast by itself (i.e., the packet broadcast itself) in a time slot. Here, when there are simultaneous transmissions, we will allow the base station to recover some information (i.e., at a lower level in the stack, it recovers sum information).

We define $\kappa$ as the \defn{decoding threshold} for the system, i.e., the threshold determined by the hardware for how many simultaneous transmissions can yield useful information.
We say that a time slot is \defn{good} if there is at least one and at most $\kappa$ transmissions in that time slot.
We say that a time slot is \defn{bad} if there are more than $\kappa$ transmissions in that  slot. 
If no packets broadcast in the slot, it is a \defn{silent} slot. (Silent slots are neither good nor bad.)
The packets in the system cannot tell  which steps are good and bad. They can hear which slots are silent, however.
\mab{Changed to say that a silent step is neither good nor bad. Ok? Is the introduction of silent slots here ok?}

The execution will consist of \defn{decoding windows}, i.e., periods of time during which the base station has collected enough information to decode packets.
Packets that have been decoded are delivered and leave the system.
\mab{May I add leave the system sentence?}
Since a base station only learns useful information during a good time slot, it can only decode packets sent during good time slots---and ignores packets sent only during bad time slots.
We focus on decoding windows where \emph{every} packet that is broadcast during a good time slot must be decoded.  

A decoding window needs to be long enough that the base station can decode all the packets sent during good slots. The information-theoretic constraint implies that if we are to decode $j$ packets, we are going to need at least $j$ good time slots. (In the next section, we explain the linear-algebraic intuition why $j$ time slots is sufficient.)

At the end of a successful decoding window, a \defn{decoding event} occurs:
\begin{definition}
\label{def:decodingevent}
There is a \defn{decoding event} of size $j$ at time $t$ if there exists a
time window that begins with a good slot and ends at time $t$, such that:
\begin{enumerate}[noitemsep]
\item The window contains no decoding events.
\item Exactly $j$ packets broadcast in the window in good slots.
\item The number of good slots in the window is at least $j$.
\end{enumerate}
That window is the \defn{decoding window}.
\begin{enumerate}[noitemsep]
\setcounter{enumi}{3}
\item Decoding events occur iteratively over time, meaning a decoding event occurs the first time that these conditions are satisfied after the previous decoding event.
\end{enumerate}
When a decoding event of size $j$ occurs, the $j$ packets that broadcast in good slots during the decoding window are received by the base station.
\end{definition}

Every device knows when a successful decoding event occurs.

\paragraph{Examples.}
If three packets broadcast simultaneously for three rounds (and $3 \leq \kappa$), then those three rounds yield a decoding event of size $3$.  Alternatively, if all three packets $(a,b,c)$ broadcast in time slot $1$ simultaneously, while packets $b$ and $c$ broadcast in time slot $2$ and packet $c$ broadcasts alone in time slot $3$, then after these three time slots a decoding event occurs and all three packets are delivered.  Notice that this definition satisfies our intuition that it should always take at least $j$ time slots to successfully deliver $j$ packets.

This formalization of a decoding event does not take advantage of all of the information that the decoding technology may be able to obtain. For example, consider the schedule in which packets $a$ and $b$ broadcast together in time steps $1$ and $3$. In step $2$ a single packet $c$ broadcasts, i.e., there is a decoding event in the window comprised of only step~$2$.  According to \defref{decodingevent}, we cannot have a decoding event for steps~1-3, because there was already a decoding event at step~2. Since decoding windows are required to be disjoint, the broadcasts from step $1$ cannot be used for any future decoding window. Even though the broadcasts in steps $1$ and $3$ provide sufficient information to decode packets $A$ and $B$, that information is lost.

In fact, this paper will only need (and have) decoding intervals of size $O(\kappa)$.  In practice, decoding windows will also be limited in length, e.g., the base station should not have to remember and try to decode arbitrarily long sequences.

\subsection*{Practicalities}

Here, we are going to give some intuition for how random linear network coding might be used to implement the model described above. (Complete physical-layer implementations are beyond the scope of this paper.)

From the perspective of linear network coding, each time slot is a column vector $t$ with one entry for each possible transmitter: $t[j] = 1$ if device $j$ broadcasts and $t[j] = 0$ if device $j$ is silent.
Over the collection of good time slots that form a decoding window,  the collection of vectors forms a binary matrix $T$ containing one column for each good slot. (We drop the empty and bad time slots.)  

If device $j$ is transmitting message $m[j]$, then due to the additive nature of radio transmission, the base station receives the (row) vector $mT$ during the decoding window.  If the matrix $T$ is invertible, then the base station can now decode the packets sent during this decoding window by computing $mTT^{-1}$.

There are many techniques for ensuring that the matrices are invertible. For example, random linear coding multiplies each message by a random coefficient before it is sent.   The resulting transmission matrix $T$ is now a random matrix and hence likely to be invertible.  An alternative is to randomize the slots in which packets broadcast to ensure that each column vector is  unique (which is all that is needed for linear independence).

A key technical issue is how the base station learns the transmission matrix.  That is, the base station needs to know which packets are being transmitted in which slots in order to decode the packets. And in the case of random linear network coding, the base station also needs to know the coefficients. Thus, there is a small amount of ``control information'' that the base station needs to know before the packets can be decoded.
 
This problem is easier to solve in the context of this paper, where we ensure that the same packets always broadcast together during each slot of a decoding window.  (For example, consider the situation where packets $(a,b,c)$ are broadcast together three times until a decoding event occurs.)  
This decision makes for both a simpler algorithm, but also one that is more compatible with existing physical layer systems.
Indeed, the requisite control information can even be sent using the ZigZag Decoding techniques~\cite{GollakotaKa08}.
In the general case, other  techniques are needed to communicate this control information, including error-correcting codes, lower communication rates, leveraging random offsets (as in ZigZag Decoding), etc.

While our discussion here has been brief, developing systems with these types of network coding capabilities has been and remains an active and exciting area of research; see e.g.,~\cite{EbrahimiLK20,KattiRHKMC06,KattiRHKMC08,FengSK13,KattiGK07,CasiniGH07}.



\section{\sysname}
\label{sec:algorithm}

In this section, we describe the \sysname, a protocol for achieving efficient channel utilization in the Coded Radio Network model.

\paragraph{Admission control.} Our algorithm separates packets into two types: \defn{active} and  \defn{inactive}. 
Newly joined packets start as inactive.
Inactive packets listen but do not broadcast; active packets do both. 

Silent steps trigger \defn{packet activations}.
When an inactive packet hears a silent slot, it becomes active.
An active packet  stays active until it successfully broadcasts and leaves the system; 
it never returns to being inactive.

\paragraph{Types of epochs.}
The algorithm divides time into \defn{epochs}, where the same set of packets broadcast in each slot of the epoch.   
The length of an epoch can vary from $1$ to $\threshold$ slots (as specified below).

Each packet $j$ has an associated \defn{joining probability $p_j$}, which means that when  a new epoch begins,
packet $j$ \defn{joins the epoch} with probability $p_j$ and then broadcasts (deterministically) in every slot of the epoch. A packet has positive joining probability if and only if it is active. 
(An inactive packet $j$ has joining probability $p_j = 0$.)  
If a packet does not join an epoch, then it remains silent until the epoch is complete.

There are three triggers that causes an epoch to end, leading to three types of epochs:
%
%
\begin{enumerate}[noitemsep]
\item \textbf{A silent slot}---This results in a \defn{silent epoch}. A silent slot implies that no packets joined the epoch.
\item \textbf{A decoding event}---This results in a \defn{successful epoch} where the packets that joined the epoch are decoded.
\item \textbf{\boldmath $\kappa$ time slots have elapsed without a decoding event}---This results in a \defn{overfull epoch}, as it implies that too many packets joined the epoch and these slots will not be part of a decoding window.
\end{enumerate}

\paragraph{\boldmath Epoch-joining probabilities.}
When a packet $j$ first becomes active, its initial joining probability is
\begin{equation}
\label{eq:initialpj}
p_j=1/\sqrt{\kappa}.    
\end{equation}

When an epoch ends, packet $j$ updates  $p_j$ as follows.    

\begin{equation}
\displaystyle
p_j \leftarrow
  \left\{ \begin{array}{ll}
\displaystyle
p_j \cdot \threshold^{1/4} &   \mbox{upon a silent epoch,} 
\vspace{2ex}\\
\displaystyle
\frac{p_j}{\threshold^{1/4}} &   \mbox{upon an overfull epoch,} 
\vspace{2ex}\\
\displaystyle
p_j  &   \mbox{upon a successful epoch.}
          \end{array} \right.
 \label{eq:changepj}
\end{equation}

\paragraph{Algorithm highlights.}

The algorithm itself is concise, but delicate: 
at the end of an epoch, packets multiplicatively update their broadcast probabilities by a factor of $\kappa^{1/4}$. Packets join with broadcast probability $p_0 = 1/\sqrt{\kappa}$. Inactive packets only become active upon hearing a silent epoch. Our target contention is $c_* = \sqrt{\kappa}$. If contention is in the window $[\kappa^{1/4}, \kappa^{3/4}]$, we say that \defn{contention is good}.

There are three important aspects that are critical to its success. 
First, the epoch structure enables coordination among the packets so that packets retransmit together, which enables them to be decoded.\footnote{It is also the case that simultaneous retransmissions simplify the lower physical-layer implementation, as discussed in \secref{model}. 
For example, ZigZag Decoding~\cite{GollakotaKa08} similarly relies on simultaneous retransmissions for this reason.}  
(Contrast this structure with a protocol in which devices broadcast independently in every round; it is much less likely to get enough information on a collection of packets that can be simultaneously decoded.)
Second, careful admission control is important because it ensures that newly arriving packets do not interfere with packets that are currently participating in epochs.

Third, and most critically, setting the $p_j$'s is  delicate, given the objective of $1-o(1)$ throughput.
It is critical that newly active packets start with a $o(1)$ joining probability; see \eqref{initialpj}, as otherwise too much time might be spent backing off. (Recall that even a small number of overfull epochs is too many for $1-o(1)$ throughput.)
Moreover, it is critical that the $p_j$s change rapidly.
To put \eqref{changepj} in context, the ``contention'' on the channel  (expected number of broadcasts in a slot) changes much  faster (exponentially) than with exponential backoff~\cite{MetcalfeBo76}.
Contention even changes  a $\omega(1)$-factor faster than the
multiplicative-weight-update algorithm of 
Chang et al.~\cite{ChangJiPe19}.



\section{Potential Function}
\label{sec:potentialfunction}

Throughout the execution, we  maintain a potential function $\Phi(t)$ that captures the state of the system at $t$ and measures the progress toward delivering all packets.
We will see that packet arrivals increase $\Phi(t)$ by $1 + o(1)$ per newly arrived packet (\lemref{arrivals}), that packets exiting the system successfully decrease $\Phi(t)$ by $1$ per packet (\lemref{success}), and that an epoch of length $\ell$ is either a successful epoch or causes contention to move closer to the desired contention of $c_*$ with high probability, and  will decrease $\Phi(t)$ by $\ell (1 - o(1))$ in those cases (\lemref{error_unlikely}, \lemref{non-error_good}).

Recall that $p_i$ is the joining probability for each packet $i$.  
We define the \defn{contention} $\boldsymbol{c_t} = \sum_i p_i$.  We define $\boldsymbol{\pmin(t)} = \min_{\{i : i \text{ is an active packet}\}} p_i$
to be the minimum probability of any \emph{active} packet currently in the system. If there are no active packets in the system, we define $\pmin(t) = 1$. We define the \defn{target contention} $\boldsymbol{c_*} = \sqrt{\threshold}$. $N_t$ is the total number of packets in the system at time $t$, including both active and inactive packets. $M_t$ counts the number of inactive packets in the system at time $t$.

The potential function consists of four components (named below).

\[
\Phi(t) = N_t 
\,+\, 
\max\left\{0,\,  4 \threshold \log_{\threshold} \Big(\, \frac{c_t}{c_*}\,\Big)\right\} 
\,+\, 
4 \log_{\threshold} \! \Big(\, \frac{1}{\pmin(t)}\, \Big) 
\,+\,
\frac{5\, M_t}{\ln \threshold}\,.
\]

\begin{enumerate}
    \item $\boldsymbol{N(t)} = N_t$. Since no term can go negative, this term implies that if the potential is small, there cannot be many packets remaining in the system.
    
    \item $\boldsymbol{\mbox{\bf logC}(t)} = \max\big\{0,\, 4\threshold \log_{\threshold} 
    \big(\frac{c_t}{c_*}\big)\big\}$. When  positive, the term indicates how far the contention is from ideal, i.e., how many epochs of length $\threshold$ are needed to reach $c_*$.

    \item $\boldsymbol{s(t)} = 4\log_\threshold\big(\frac{1}{\pmin(t)}\big)$.
    The term indicates how many silent epochs (of length $1$) would be necessary to bring the packet with minimal join-probability up to join probability $1$. 

    \item $\boldsymbol{u(t)} = \frac{5 M_t}{\ln \threshold}$.
    The term compensates  for packet activations. Potential is stored in this function for each arrived, inactive packet. That potential then compensates for the increase in the second term caused by the rise in contention when the packets activate.
\end{enumerate}

It is immediate that the potential function is always positive when there are packets in the system. 
Since all the terms are non-negative, it also is immediately true that $N(t) \leq \Phi(t)$, i.e., the potential function is an upper bound on the number of packets in the system. If we can show that the potential in the system remains low, then we can conclude that the number of packets in the system also remains low.

It is constructive to compare our potential function with the function from Chang et al.~\cite{ChangJiPe19}.
Similarly, to Chang et al., $\Phi(t)$ has  terms that measure the number of packets, how far contention is from the algorithm's target contention, and the smallest broadcast probability.
Our design of 
$\Phi(t)$ also takes  into account that 
(1) epochs  can be different lengths, (2) there are transitions from inactive to active packets, (3) our multiplicative-weight update rule is especially aggressive, and (4) only a  $o(1)$ fraction of slots can be bad.

\section{Throughput Analysis}
\label{sec:channelutilization}

In this section, we will show that the \sysname guarantees $1 - o(1)$ throughput. Most of the section is devoted to analyzing the potential function, specifically showing that it decreases by $1 - o(1)$ in expectation in every round where it is $> 6\threshold$, and increases by $1 + o(1)$ for every packet that is injected by the adversary). Thus, in \thmref{throughput}, we can conclude that as long as no more than $w(1 - 5/{\ln \threshold})$ packets arrive in a window of length $w$, then (with high probability) there are never more than $2w$ packets in the system. 

We first observe that successful epochs correspond to decoding windows:
\begin{lemma}
\label{lem:decodingwindow}
An epoch is successful if and only if $\leq \threshold$ packets join the epoch.  In that case, there is a decoding event at the end of the epoch and the decoding window corresponds exactly with the epoch.
\end{lemma}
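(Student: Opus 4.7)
The plan is to do a case analysis based on the number $j$ of packets that join the given epoch, using the key observation that exactly the same $j$ packets broadcast in every slot of the epoch, so every slot of the epoch has exactly $j$ transmissions.

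First I would dispose of the two easy cases. If $j=0$, then the first slot of the epoch is silent and the epoch terminates immediately as a silent epoch; in particular, it is not successful. If $j > \threshold$, then every slot of the epoch has $> \threshold$ simultaneous transmissions, so every slot is bad (not good), no decoding event can occur during the epoch, and the epoch ends only when the $\threshold$-slot overfull trigger fires. So in both cases the epoch is unsuccessful, and there is no decoding event, matching the ``only if'' direction.

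Next I would treat the interesting case $1 \le j \le \threshold$, where every slot of the epoch is good. The main task is to show that a decoding event of size $j$ occurs exactly at the $j$-th slot of the epoch and that its decoding window is precisely those $j$ slots. For the former, I would first argue that no decoding event can occur at any slot $k < j$ of the epoch: any candidate window ending at such a slot must begin at a good slot, and since the previous epoch was silent (no good slots), overfull (no good slots), or successful (its good slots were consumed by the decoding event at its end, which by \defref{decodingevent} condition~1 cannot lie inside a later window), the window's good slots all lie inside the current epoch. Hence it contains at most $k$ good slots and exactly $j$ packets broadcasting, so condition~3 (at least $j$ good slots) fails. At slot $k=j$, the window of length $j$ consisting of the first $j$ slots of the epoch begins with a good slot, contains no prior decoding event, has exactly $j$ packets broadcast in good slots, and has $j \ge j$ good slots, so all conditions of \defref{decodingevent} are met and a decoding event of size $j$ occurs. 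This terminates the epoch (making it successful) and the decoding window coincides with the epoch.

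The main obstacle, and the part that requires the most care, is ruling out that the decoding window extends back into earlier epochs, since the definition only requires that it ``begins with a good slot'' after the previous decoding event. I would handle this by a uniform argument: walking backward from the current epoch, each previous epoch was either silent, overfull, or successful, and in none of these cases does it contribute a good slot that lies after the most recent decoding event but outside the current epoch. This pins the start of the decoding window to the first slot of the current epoch and gives exactly the correspondence claimed in the lemma.
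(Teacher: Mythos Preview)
Your proposal is correct and follows essentially the same approach as the paper's proof: both argue by cases on the number $j$ of joiners, use that every slot of the epoch has exactly $j$ transmissions (hence all good iff $1\le j\le\threshold$), and rule out the decoding window extending left of the epoch by observing that all slots between the previous decoding event and the current epoch are bad or silent. Your version is slightly more explicit in verifying that no decoding event fires at slot $k<j$ and that the first $j$ slots form the unique valid window, but the underlying argument is the same.
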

\begin{proof}
If $\leq \threshold$ packets join the epoch, then those packets will broadcast together in each of the following $\leq \threshold$ rounds of the epoch.  These broadcasts cause a decoding event at the end of the epoch which includes those packets.  

The last slot in a decoding window must be a good slot. (If the last slot were bad, then the conditions would have been satisfied earlier and the decoding event would have occurred earlier.)  If $> \threshold$ packets join an epoch, then every round in the epoch is bad, and hence no decoding event occurs at the end of the epoch.  If no packets join an epoch, then again the last slot of the epoch is not good and so again, no decoding event occurs.

The decoding window cannot be a suffix of the successful epoch that it concludes for the following reason: every slot in a successful epoch is good (because $\leq \threshold$ packets joined the epoch); if the decoding event delivers $j$ packets, then it must occur after exactly $j$ slots of the epoch---no earlier, and no later. If there were even one slot in the epoch that preceded the decoding window, then, the entire decoding window could have moved one slot earlier.

The decoding window cannot extend to the left of the successful epoch that ends with the decoding event.  That is because between the last decoding event and the beginning of this epoch, every slot is either bad or empty, as each of the intervening epochs is either overfull or silent.  And, as per the definition of the decoding window, it must begin with a good slot.

Thus we conclude that the decoding window corresponds exactly with a successful epoch which $\leq \threshold$ packets joined.
\end{proof}

\subsection{Analyzing the probability of an error epoch}

\begin{definition}
We define an \defn{error epoch} to be an epoch that is either a silent epoch that occurs when contention $c_t \geq \threshold ^{1/4}$ or an overfull epoch that occurs when contention $c_t \leq \threshold ^{3/4}$. Note that this is well-defined since contention is constant throughout an epoch.
\end{definition}
\seth{Contrast error epoch to successful/overfull epochs.}

\begin{lemma}
An epoch beginning at time $t$ is an error epoch with probability at most $1 / 2^{\Theta(\threshold^{1/4})}$.
\label{lem:error_unlikely}
\end{lemma}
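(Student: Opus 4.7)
The plan is to split on the two possibilities in the definition of an error epoch and bound each with an elementary tail argument. Fix the start time $t$: since joining probabilities are updated only at epoch boundaries (via \eqref{changepj}), the contention $c_t = \sum_i p_i$ is frozen for the whole epoch, and the event that the epoch is silent, successful, or overfull is determined by independent $\mathrm{Ber}(p_i)$ trials, one per active packet.

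For the silent branch I assume $c_t \ge \threshold^{1/4}$ and bound the probability that \emph{no} packet joins. By independence and the elementary inequality $1-x \le e^{-x}$,
\[
\Pr[\text{no packet joins}] \;=\; \prod_i (1-p_i) \;\le\; \exp\!\Big(-\!\sum_i p_i\Big) \;=\; e^{-c_t} \;\le\; e^{-\threshold^{1/4}},
\]
which is already $1/2^{\Theta(\threshold^{1/4})}$.

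For the overfull branch I assume $c_t \le \threshold^{3/4}$ and bound the probability that the number of joiners $X = \sum_i \mathrm{Ber}(p_i)$ exceeds $\threshold$. Since $\E[X] = c_t \le \threshold^{3/4}$, the event $X > \threshold$ requires $X$ to overshoot its mean by a multiplicative factor of at least $\threshold^{1/4}$. A standard upper-tail Chernoff bound for sums of independent Bernoullis, e.g.\ $\Pr[X \ge s] \le (e\,\E[X]/s)^s$ whenever $s \ge 2\,\E[X]$, then gives
\[
\Pr[X > \threshold] \;\le\; \left(\frac{e\,\threshold^{3/4}}{\threshold}\right)^{\!\threshold} \;=\; \left(\frac{e}{\threshold^{1/4}}\right)^{\!\threshold},
\]
which for all sufficiently large $\threshold$ is far smaller than $e^{-\threshold^{1/4}}$. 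Summing the two cases, which correspond to mutually exclusive epoch outcomes, yields the claimed bound of $1/2^{\Theta(\threshold^{1/4})}$.

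I do not expect a genuine obstacle here; both branches of the definition were designed so that classical concentration applies cleanly, with the silent branch driving the $\Theta(\threshold^{1/4})$ rate in the lemma and the overfull branch contributing a tail that is asymptotically negligible. The only technical point worth flagging is the ``contention is constant throughout an epoch'' remark accompanying the definition: this is what lets me treat $c_t$ and all $p_i$ as fixed during the epoch and therefore apply the Bernoulli tail bounds in the form above.
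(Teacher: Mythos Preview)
Your proposal is correct and follows essentially the same approach as the paper: split into the silent-with-high-contention and overfull-with-low-contention cases, apply Chernoff-type tail bounds to each, and combine. The only difference is that you spell out the explicit inequalities (the $e^{-c_t}$ bound for the silent case and the $(e\mu/s)^s$ form for the overfull case) where the paper simply invokes ``Chernoff bounds'' and states the resulting rates.
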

\begin{proof}
An error epoch is a silent epoch that occurs when $c_t \geq \threshold^{1/4}$ or an overfull epoch that occurs when $c_t \leq \threshold^{3/4}$.

Applying Chernoff bounds, the probability of an epoch being silent (i.e. no packets joining the epoch) when $c_t > \threshold^{1/4}$ is at most $1 / 2^{\Theta(\threshold^{1/4})}$. Similarly, the probability that an epoch is overfull (i.e. more than $\threshold$ packets join) when $c_t < \threshold^{3/4}$ is at most $1 / \threshold^{1/4 (\Theta(\threshold))} = 1/\threshold^{\Theta(\threshold)}$. Applying a union bound, the probability of an error epoch occurring is at most $1/2^{\Theta(\threshold^{1/4})}$. \john{Need some constants.}
\end{proof}

We now bound the number of error epochs during an interval.
\begin{lemma}
\label{lem:intervalerror}
Let $t_2$ be a time slot.  Then for all intervals $I = (t_1, t_2]$ where $t_1 < t_2$, of size $w = t_t - t_1$ satisfying $\sqrt{w} > \Theta(\threshold^{1/4})$, the number of 
error epochs occurring in $I$ is at most $\sqrt{w} + c(t_2 - t_1) / 2^{\Theta(\threshold^{1/4})}$, for any fixed constant $c \geq 6$, 
with probability $\geq 1 - 1/ \poly{w}$.
\end{lemma}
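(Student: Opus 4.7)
My plan is to stochastically dominate the number of error epochs in $I$ by a binomial random variable and then apply a concentration bound. Since each epoch occupies at least one time slot, at most $w$ epochs can end inside $I$. By \lemref{error_unlikely}, for every such epoch and conditional on the entire execution history up to its start, the probability that it becomes an error epoch is at most $p := 1/2^{\Theta(\threshold^{1/4})}$. A standard coupling argument therefore shows that the number $X$ of error epochs in $I$ is stochastically dominated by $Y \sim \Bin(w,p)$ with mean $\mu = wp \leq (t_2 - t_1)/2^{\Theta(\threshold^{1/4})}$.

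Next I would apply a Chernoff tail bound to $Y$. Write the target threshold as $\mu + t$ with $t := (c-1)\mu + \sqrt{w}$, so that $t \geq \sqrt{w}$ and, since $c \geq 6$, also $\mu \leq t/5$. The Bernstein-type Chernoff bound $\Pr[Y \geq \mu + t] \leq \exp\!\left(-\frac{t^2}{2\mu + 2t/3}\right)$ has denominator at most $\tfrac{2t}{5} + \tfrac{2t}{3} = \tfrac{16t}{15}$, and therefore
\[
\Pr\!\left[X \geq \sqrt{w} + c\,wp\right] \;\leq\; \exp\!\left(-\tfrac{15\,t}{16}\right) \;\leq\; \exp\!\left(-\tfrac{15\sqrt{w}}{16}\right).
\]

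The hypothesis $\sqrt{w} > \Theta(\threshold^{1/4})$, combined with $w$ being large enough that $\sqrt{w} = \omega(\log w)$ (which is implicit in asking for a $1 - 1/\poly{w}$ guarantee), renders the tail at most $1/\poly{w}$, as required. The main technical subtlety is that epochs are not mutually independent, since whether an epoch is an error depends on the contention at its start, which is itself shaped by earlier outcomes; however, \lemref{error_unlikely} supplies a worst-case \emph{conditional} bound on each epoch, so stochastic domination by $\Bin(w,p)$ is valid without any real independence assumption. If the statement is read as a simultaneous guarantee over all choices of $t_1 < t_2$ for a fixed $t_2$, a final union bound over the at most $t_2$ values of $t_1$ is easily absorbed into the polynomial slack.
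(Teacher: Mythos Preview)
Your proposal is correct and follows essentially the same approach as the paper: bound each epoch's conditional error probability via \lemref{error_unlikely}, apply a Chernoff-type tail bound (the paper uses the multiplicative form $\Pr[X \geq R] \leq 2^{-R}$ valid when $R \geq 6\E[X]$, which is where the condition $c \geq 6$ enters; you use the Bernstein form and reach the same conclusion), and finish with a union bound over $t_1$. Your explicit phrasing in terms of stochastic domination by $\Bin(w,p)$, conditioning only on the past rather than the paper's ``all $X_j$ with $j\neq i$,'' is in fact the cleaner and more accurate justification; otherwise the two arguments are the same.
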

\begin{proof}
Choose an arbitrary slot $t_1 < t_2$, and define interval $I = (t_1, t_2]$.  Let $j$ be the random variable denoting the number of error epochs that occur in $I$. By Lemma~\ref{lem:error_unlikely}, $\mathbb{E}[j] \leq (t_2 - t_1) / 2^{\Theta(\threshold^{1/4})}$, as $I$ contains at most $t_2 - t_1$ distinct epochs. Let $X_i$ be an indicator random variable that is $1$ iff the $i$th epoch in
$I$ is an error epoch. Note that $\Pr[X_i = 1] \leq 1 / 2^{\Theta(\threshold^{1/4})}$, 
even if we condition on the values of $X_j$ for all $j \neq i$. 
Thus, we can apply Chernoff Bounds to find:
\[\Pr\left[j \geq \sqrt{w} + c(t_2 - t_1)/2^{\Theta(\threshold^{1/4})}\right] \,\leq\, 1 / 2^{\sqrt{w} + c(t_2 - t_1) / 2^{\Theta(\threshold^{1/4})}}\,.\]
Union bounding over all $t_1$ from $t_1 = 0$ to $t_1 = t_2$, we find that the probability that the number of error epochs in $I$ exceeds $\sqrt{w} + c(t_2 - t_1) / 2^{\Theta(\threshold^{1/4})}$ for any interval $I = (t_1, t_2]$, over all $t_1$, is at most

\[\sum_{t_1 = 0}^{t_2}  1 / 2^{\sqrt{w} + c(t_2 - t_1) / 2^{\Theta(\threshold^{1/4})}} \,\leq\, (2/c) 2^{\Theta(\threshold^{1/4}) - \sqrt{w}} \,. 
\]

The inequality follows due to the sum being over a geometric series.
\end{proof}

\subsection{Evolution of the Potential Function}

\mab{John, note that busy $\rightarrow $ overfull.}
Consider an epoch of length $\ell$ during which $i$ packets are inserted by the adversary, for any arbitrary $(\ell, i)$. We prove in this subsection that error epochs are the only epochs in which the potential $\Phi(t)$ fails to decrease by at least $\ell(1 - 1/\threshold) - i(1 + 5/\ln \threshold)$ (\lemref{non-error_good}). 

First, we analyze the impact of packets joining the system on the potential function.  This takes place in two steps.  First, they arrive and are inactive.  This only increases the first and last terms of the potential function:

\begin{lemma}
Each packet that arrives during in slot $t$ increases the potential by $\Phi(t)$ by $1 + 5/\ln \threshold$.
\label{lem:arrivals}
\end{lemma}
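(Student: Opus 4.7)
The plan is to walk through each of the four components of $\Phi(t)$ and track exactly how each one changes when a single newly arrived packet appears in slot $t$. By the admission control rule, every newly arrived packet enters the system as \emph{inactive}, and this is the key fact that makes most terms invariant.

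First I would observe that $N_t$ increases by exactly $1$, since $N_t$ counts all packets (active and inactive) and one packet has been added. Next, I would argue that $\text{\bf logC}(t) = \max\{0, 4\kappa \log_\kappa(c_t/c_*)\}$ is unchanged: the contention $c_t = \sum_i p_i$ is a sum over packets' joining probabilities, and an inactive packet has $p_j = 0$ by the convention stated in \secref{algorithm}. Hence the new packet contributes nothing to $c_t$, and the second term is unaffected. Similarly, I would note that $s(t) = 4\log_\kappa(1/\pmin(t))$ is unchanged because $\pmin(t)$ is the minimum of $p_i$ over \emph{active} packets only; the new inactive packet is simply not in the set over which the minimum is taken, so $\pmin(t)$ does not change. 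Finally, since $M_t$ counts inactive packets and the new packet is inactive, $M_t$ increases by $1$, so the fourth term $u(t) = 5M_t/\ln\kappa$ grows by exactly $5/\ln\kappa$.

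Summing the contributions gives $\Delta\Phi(t) = 1 + 0 + 0 + 5/\ln\kappa = 1 + 5/\ln\kappa$, which is the claimed bound. The main (and really the only) subtlety is to justify carefully why $c_t$ and $\pmin(t)$ are unaffected; both follow immediately from the admission control design choice that new packets begin life as inactive with $p_j = 0$ and from the fact that $\pmin(t)$ is defined as a minimum over active packets. No probabilistic or union-bound arguments are needed here: this is a deterministic accounting lemma, which is precisely why the $u(t)$ term was built into $\Phi$ in the first place, so that arrivals can be charged cleanly while still leaving room for the activation-time increase in $\text{\bf logC}(t)$ to be absorbed later.
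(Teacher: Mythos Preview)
Your proof is correct and follows exactly the same reasoning as the paper: arriving packets are inactive, hence do not affect $c_t$ or $\pmin(t)$, so only the first and last terms of $\Phi$ change, by $1$ and $5/\ln\threshold$ respectively. The paper's proof is simply a terser version of the term-by-term accounting you give.
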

\begin{proof}
Newly inserted packets arrive as inactive packets, and thus they have no impact on contention or $\pmin$. The only terms they affect are the first and last, which increase by $1$ and $5/\ln \threshold$, respectively, per newly inserted packet.
\end{proof}

Next, we consider what happens to the potential function when packets activate.  This activation occurs only during a silent epoch, and can  impact only the second, third, and fourth terms of the potential function:
 
\begin{lemma}
If $\Phi(t) > 6 \threshold$ and slot $t$ is part of a non-error epoch, then packet activations do not increase $\Phi(t)$.
%
More precisely, packet activations in slot $t$ increase $\Phi(t)$ only in the following situations:
\begin{itemize}[noitemsep]
    \item $\Phi(t) < 6 \threshold$ and $\pmin(t) > 1/\sqrt{\threshold}$.
    \item Slot $t$ is part of an error epoch, and $\pmin(t) > 1/\sqrt{\threshold}$.
\end{itemize}
In these cases, $\Phi(t)$ increases by $< 2$ because of packet activations during the (silent) epoch.
\label{lem:activation}
\end{lemma}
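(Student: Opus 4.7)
The plan is to decompose the change $\Delta\Phi$ caused by the $k:=M_t$ simultaneous activations (every inactive packet activates upon hearing the silent slot) across the four terms of $\Phi(t)$. The $N_t$ term is invariant; $M_t$ drops by $k$, giving $\Delta u = -5k/\ln\threshold$; contention jumps by exactly $k/\sqrt{\threshold}$ to $c_t' := c_t + k/\sqrt{\threshold}$; and since each activated packet joins with probability $1/\sqrt{\threshold}$, we have $\pmin^{\mathrm{after}} = \min\{\pmin(t), 1/\sqrt{\threshold}\}$.

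For the $\mathrm{logC}$ term, I would do a case split on whether each of $c_t$ and $c_t'$ is above or below $c_*=\sqrt{\threshold}$. Using $\ln(1+x)\le x$, every case yields $\Delta\,\mathrm{logC} \le 4k/\ln\threshold$; the trickiest subcase is $c_t \le c_* < c_t'$, handled by $c_t'/c_* \le 1 + k/\threshold$. Combining with $\Delta u$ already gives $\Delta\,\mathrm{logC} + \Delta u \le -k/\ln\threshold$, so the $\mathrm{logC}$ and $u$ terms together can never drive $\Phi$ upward.

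The $s$ term is the sole potential source of increase. If $\pmin(t) \le 1/\sqrt{\threshold}$, activations leave $\pmin$ untouched and $\Delta s = 0$, so $\Delta\Phi \le -k/\ln\threshold \le 0$ with no further hypotheses. Otherwise $\pmin$ drops to $1/\sqrt{\threshold}$, making $\Delta s = 2 - 4\log_\threshold(1/\pmin(t)) \in (0,2]$. Together these give $\Delta\Phi \le 2 - k/\ln\threshold$, which is strictly less than $2$ whenever $k\ge 1$; this establishes both of the exceptional situations listed and the quantitative ``$<2$'' bound.

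It remains to upgrade the bound to $\Delta\Phi \le 0$ when $\Phi(t) > 6\threshold$ and slot $t$ lies in a non-error epoch (necessarily a non-error silent epoch, since activations only fire on silent slots). In this regime $c_t < \threshold^{1/4}$, so $\mathrm{logC}(t) = 0$. The only way to create any increase is $\pmin(t) > 1/\sqrt{\threshold}$; under this assumption every active packet contributes more than $1/\sqrt{\threshold}$ to $c_t$, so the number $A_t$ of active packets satisfies $A_t < c_t\sqrt{\threshold} < \threshold^{3/4}$, and $s(t) < 2$. Plugging into $\Phi(t) > 6\threshold$ yields $N_t + 5M_t/\ln\threshold > 6\threshold - 2$, forcing $N_t = \Omega(\threshold)$ and hence $k = M_t = N_t - A_t = \Omega(\threshold)$. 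This $k$ dominates the $+2$ slack from $s$, giving $\Delta\Phi \le 2 - k/\ln\threshold < 0$. The main obstacle is exactly this last chaining: the $+2$ slack from $s(t)$ is a constant and so must be absorbed by a $u$-decrement of at least $2$, requiring $\Omega(\ln\threshold)$ activations; the $\Phi > 6\threshold$ threshold combined with $c_t < \threshold^{1/4}$ on non-error silent epochs is exactly what guarantees this many inactive packets exist.
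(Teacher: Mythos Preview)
Your proof is correct and follows essentially the same approach as the paper: both decompose $\Delta\Phi$ termwise, obtain $\Delta\mathrm{logC}+\Delta u\le -k/\ln\threshold$ via $\ln(1+x)\le x$, identify the $s$-term as the sole possible source of increase (bounded by $2$), and then argue that under the stated hypotheses enough inactive packets must activate to absorb that $+2$. The only organizational difference is that the paper splits the final step on whether the number of active packets is $<\threshold$ or $\ge\threshold$ (deriving $\Phi\le 6\threshold$ in the first case and ``error epoch'' in the second), whereas you use the non-error hypothesis $c_t<\threshold^{1/4}$ directly to get the sharper bounds $\mathrm{logC}(t)=0$ and $A_t<\threshold^{3/4}$; this is a cosmetic reordering of the same contrapositive.
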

\begin{proof}
Upon hearing a silent slot, all inactive packets activate.  Assume that $r = M_t$ packets activate.
Notice that if $\pmin > 1/\sqrt{\threshold}$, then $\pmin$ decreases when packets activate.

We first show that if $\pmin$ does not change due to the packet activations, then the potential does not increase due to packet activations.  To show this, assume that $\pmin$ does not change when the packets are activated.  Consider the four terms of the potential function:
\begin{itemize}
    \item $N(t)$ does not change based on packet activations (as it is only affected by new packet injections).
    \item The second term of the potential function $\mbox{logC}(t)$ increases by at most $4r/\ln{\threshold}$ because: If $c_t < c_*$, then $\mbox{logC}(t) = 0$; hence we only need to consider the increase in contention above $c_* = \sqrt{\threshold}$.  That is, the maximum increase in $\mbox{logC}(t)$ is at most 
    $4\kappa\log_{\kappa}(1 + (r/\sqrt{\kappa})/\sqrt{\kappa}) \leq 4\kappa r / (\kappa\ln{\kappa}) = 4r/\ln{\threshold}$ (this inequality follows using $\ln(1+x) \leq x$). 
    \seth{John, is there an easy way to explain the (obvious) fact that $c_t = c_*$ is the worst-case?}
    \item $s(t)$ does not increase, by assumption for this case.
    \item $u(t)$ decreases by $5r/\ln{\threshold}$ because $r$ inactive packets became active.
\end{itemize}
We conclude that if $\pmin$ is not changed by the packet activations (and hence $s(t)$ is not increased), then the increase in potential due to $\mbox{logC}(t)$ is less than the decrease in potential due to $u(t)$, and hence $\Phi(t)$ does not increase.  

Next, we show that even if $\pmin$ does change due to packet activations, the potential never increases by more than $2$. Consider the third term of the potential function $s(t)$.  If $\pmin$ decreases, then $s(t)$ increases. A packet, when activated, begins with $p_j = 1/\sqrt{\kappa}$.  Thus, the most that $s(t)$ can increase is $< 4\log_{\threshold}{\sqrt{\threshold}} = 2$ since $\pmin < 1$ during a silent slot.  In total, summing the changes over all four parts of the potential function, the potential increases by at most: $4r/\ln{\threshold} + 2 - 5r/\ln{\threshold} < 2$.

It remains to show that $\Phi(t)$ increases (by at most 2, as already shown) only when $\pmin$ is reduced due to packet activations \emph{and} either $\Phi(t) < 6\threshold$ or there is an error epoch.  Consider the following three cases (immediately prior to the packet activations):

\paragraph{Case 1: The number of active packets \boldmath$< \threshold$ and $\pmin$ is reduced by packet activations.}  If $r/\ln \threshold \geq 2$, then there is no increase in potential due to packet activations because $(4r/\ln{\kappa} + 2) - 5r/\ln{\kappa} \leq 0$. Otherwise, $r < 2\ln{\threshold}$.

In this latter case (where the potential might increase and $r < 2\ln{\threshold}$), we can compute the potential function before the packet activations as follows: 
\begin{itemize}
    \item $N(t) \leq \threshold + r$, by the case assumption.  Since $r \leq 2\ln{\threshold} \leq \threshold$, we conclude that $N(t) \leq 2 \threshold$.
    \item $\mbox{logC}(t) \leq 4\threshold \log_{\threshold} (c_t/\sqrt{\threshold}) \leq 2\threshold$ because $c_t \leq \kappa$.
    \item $s(t) \leq 4\log_{\threshold}(\sqrt{\threshold}) \leq 2$; because $\pmin$ is reduced by the packet activations, we know that before these activations, it must have been the case that $\pmin \geq 1/\threshold$.
    \item $u(t) = 5r / \ln{\threshold} < 10$.
\end{itemize}
Thus, $\Phi(t) \leq 2\threshold + 2\threshold + 2 + 10 \leq 4\threshold + 12$.  Given that $\threshold \geq 6$, we conclude that $\Phi(t) \leq 6\threshold$.  Thus, in this case, the potential increases only if $\Phi(t) \leq 6\threshold$.

\paragraph{Case 2: The number of active packets \boldmath$\geq \threshold$ and $\pmin$ is reduced by packet activations.}
Since $\pmin$ is reduced by the packet activations, we know that prior to their activation, $\pmin > 1/\sqrt{\threshold}$.  It follows that $c_t > \threshold/\sqrt{\threshold} = \sqrt{\threshold}$. We know that this is a silent epoch, because otherwise the packets would not have activated.  And the contention at the beginning of this silent epoch was $> \sqrt{\threshold}$.  Thus, this epoch is an error epoch, by definition.

\paragraph{Case 3: \boldmath $\pmin$ is \emph{not} reduced by packet activations.} As we have already shown, the potential does not increase.

Over all three cases, then, we see that potential only increases when $\Phi(t) \leq 6\threshold$ and $\pmin$ is reduced by the packet activations, or when there is an error epoch and $\pmin$ is reduced by the packet activations.  And in all such cases, the potential increases by at most 2.
\end{proof}

\begin{lemma}
A successful epoch of length $\ell$, during which $i$ packets are inserted by the adversary, will result in a decrease to the potential of at least $\ell - i(1 + 5/\ln \threshold)$.
\label{lem:success}
\end{lemma}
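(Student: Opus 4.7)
The plan is to compare $\Phi$ at the start of the epoch to $\Phi$ at its end, term by term, using the key fact (\lemref{decodingwindow}) that a successful epoch of length $\ell$ is one in which exactly $\ell$ packets join and all $\ell$ are decoded and removed at the end of the epoch.

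First I would observe that every slot of the epoch has at least one transmission (the $\ell \ge 1$ joiners broadcast in every slot), so there is no silent slot in the epoch. Consequently, no inactive packet activates during the epoch, and any of the $i$ packets inserted by the adversary during the epoch remains inactive (with join probability $0$) at the end of the epoch. Also, by the third case of \eqref{changepj}, the joining probabilities of the packets that do not join the epoch are not changed by the epoch.

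Then I would account for the change in each of the four terms of $\Phi$:
\begin{enumerate}[noitemsep]
  \item $N(t)$ changes by $i - \ell$: the $\ell$ decoded packets leave, and $i$ new packets enter.
  \item $\mbox{logC}(t)$ does not increase: contention $c_t = \sum_j p_j$ strictly decreases, since the $\ell$ decoded packets had positive $p_j$ and are removed from the sum while no remaining packet has its $p_j$ changed (the $i$ arrivals contribute $0$ since they are inactive). Thus $\log_\threshold(c_t/c_*)$ does not increase, and $\max\{0,\cdot\}$ preserves this.
  \item $s(t)$ does not increase: $\pmin(t)$ can only grow when active packets are removed and no new active packet with a smaller probability is introduced. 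The $i$ arrivals are inactive so they do not affect $\pmin$; the $\ell$ departures can only raise (or leave unchanged) the minimum. Hence $4\log_\threshold(1/\pmin(t))$ does not increase.
  \item $u(t)$ increases by exactly $5i/\ln\threshold$: there are $i$ new inactive packets and no activations, so $M_t$ grows by $i$.
\end{enumerate}

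Summing, the net change in potential is at most
\[
\Delta \Phi \;\le\; (i - \ell) + 0 + 0 + \frac{5i}{\ln\threshold} \;=\; -\,\ell + i\!\left(1 + \frac{5}{\ln\threshold}\right),
\]
which is exactly the claimed bound on the decrease.

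The proof is essentially a bookkeeping exercise; the only subtlety is verifying that nothing ``bad'' can raise $\mbox{logC}$ or $s$ during the epoch. This is where the admission-control rule (new arrivals are inactive, with $p_j=0$) and the fact that a successful epoch contains no silent slot (so no activations occur mid-epoch) are essential---otherwise a newly activated packet joining at $p_0=1/\sqrt{\threshold}$ could in principle lower $\pmin$. This is the step I would be most careful with in the write-up.
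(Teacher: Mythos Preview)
Your proposal is correct and follows essentially the same term-by-term bookkeeping as the paper's proof, invoking \lemref{decodingwindow} to identify the $\ell$ delivered packets and then checking that $N$, $\mbox{logC}$, $s$, and $u$ move in the right directions. Your write-up is actually slightly more explicit than the paper's in justifying why no activations occur during the epoch (no silent slots) and hence why the $i$ arrivals cannot affect $\pmin$ or $c_t$; this is a welcome addition but not a different approach.
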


\begin{proof}
First, let us look at the change in potential due to the inserted packets.  The first term $N_t$ increases by $i$ because there are $i$ new packets; the last term increases by $5i/\ln \threshold$ because these packets are inactive (until the next silent epoch).  The other terms in the potential function are not affected by new packet injections.

Now consider the change in potential due to the successful epoch.   The first term $N_t$ will decrease by $\ell$, as the epoch corresponds with a decoding window (\lemref{decodingwindow}) of size $\ell$ and hence $\ell$ packets are successfully delivered. 
The net change in the second term is non-positive, as the contention $c_t$ does not increase upon a successful epoch (since no broadcast probabilities are increased). Similarly, the net change in $s(t)$ is also non-positive, as $\pmin$ cannot decrease upon a successful epoch. (It may increase if the packet(s) with minimal joining probability are delivered.) 
Finally, $u(t)$ does not change since inactive packets are unaffected by a successful epoch. 
\end{proof}

Having established preliminaries regarding the effects of packet arrivals, packet activations, and successful epochs on $\Phi(t)$, we now examine the behavior of our potential function $\Phi(t)$.

\begin{lemma}
An error epoch increases $\Phi(t)$ by at most $\threshold + i(1 + 5/\ln \threshold) + 2$, where $i$ is the number of packets that arrive during the epoch.
\mab{we may want to restate the lemma to separate the case of error epochs of length one and error epochs of length $\kappa$. This may tighten the bounds in the proof.}
\label{lem:error_ok}
\end{lemma}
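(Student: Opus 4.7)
The plan is to split into the two subcases of an error epoch (a silent epoch with $c_t \geq \threshold^{1/4}$ and an overfull epoch with $c_t \leq \threshold^{3/4}$) and account for each component of $\Phi(t)$ separately, peeling off the contributions of packet arrivals and packet activations so that only the change induced by the multiplicative update of broadcast probabilities remains.

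First I would dispose of the easy bookkeeping. The $i$ packets that arrive during the epoch contribute at most $i(1 + 5/\ln \threshold)$ to $\Phi(t)$ by \lemref{arrivals}. For a silent error epoch, any packet activations triggered by the silent slot contribute at most $2$ by \lemref{activation} (for an overfull epoch no activations occur). It then suffices to show that the remaining change — coming purely from the probability update in \eqref{changepj} — is at most $\threshold$ in the silent case and at most $1$ in the overfull case.

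Next I would handle the two cases separately.
\emph{Silent error epoch.} The epoch has length $1$, so $N_t$ does not change from delivered packets. Every active packet's joining probability is multiplied by $\threshold^{1/4}$, so $c_t$ grows by the factor $\threshold^{1/4}$. In the worst case (when both the old and new contentions exceed $c_*$) the term $\mbox{logC}(t)$ grows by exactly $4\threshold \log_\threshold(\threshold^{1/4}) = \threshold$; otherwise it grows by strictly less. Meanwhile $\pmin$ is multiplied by $\threshold^{1/4}$ (from the point of view of the update alone, before activations), so $s(t)$ decreases by $1$, and $u(t)$ is unchanged by the update itself. Combining with the $+2$ accounted for by activations and the $+i(1+5/\ln\threshold)$ for arrivals gives the stated bound of $\threshold + 2 + i(1 + 5/\ln\threshold)$.
\emph{Overfull error epoch.} The epoch has length $\threshold$ but produces no decoding, so $N_t$ is unchanged. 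Every active packet's joining probability is divided by $\threshold^{1/4}$, so $c_t$ shrinks by the factor $\threshold^{1/4}$; in particular $\mbox{logC}(t)$ can only decrease. The term $s(t)$ increases by exactly $1$ because $\pmin$ shrinks by $\threshold^{1/4}$, while $u(t)$ and activation terms are untouched (since no silent slot occurs). The total change is then at most $1 + i(1 + 5/\ln\threshold) \leq \threshold + 2 + i(1 + 5/\ln\threshold)$.

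The main subtlety I expect is correctly handling the $\max\{0, \cdot\}$ in $\mbox{logC}(t)$: one must verify that even when the pre-update contention straddles $c_*$, the increase in $\mbox{logC}(t)$ is still bounded by $\threshold$ (a direct monotonicity argument since scaling $c_t$ by $\threshold^{1/4}$ multiplies the unclipped logarithm by a shift of at most $\threshold$). A secondary detail is that a silent epoch can occur when there are no active packets (so $\pmin = 1$ before the update); in that case $s(t)$ stays at $0$ rather than decreasing, but the deficit is absorbed by \lemref{activation}, which already bounds the joint effect of the activation and of $\pmin$ possibly dropping to $1/\sqrt{\threshold}$ by at most $2$.
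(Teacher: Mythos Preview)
Your proposal is correct and follows essentially the same approach as the paper: split into the silent and overfull error-epoch cases, peel off the $i(1+5/\ln\threshold)$ from arrivals via \lemref{arrivals} and the $+2$ from activations via \lemref{activation}, then observe that the multiplicative update contributes at most $\threshold$ to $\mbox{logC}(t)$ in the silent case and at most $1$ to $s(t)$ in the overfull case. Your extra remarks on the $\max\{0,\cdot\}$ clipping and the no-active-packet edge case are sound refinements that the paper's proof leaves implicit.
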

\begin{proof}
There are two types of error epochs: a silent epoch with high contention or an overfull epoch without high contention.  In both cases, the $i$ new packets that arrive increase the first term of the potential by $i$ and the last term of the potential by $5i /\ln \threshold$.
\begin{itemize}
    \item When there is an overfull epoch, $\pmin$ decreases multiplicatively by a factor of $\threshold^{1/4}$, causing an increase to $s(t)$ of exactly $1$.  Changes to $c_t$ only decrease the potential.    There is no change in potential due to packet activations because the epoch is not silent.  
    
    \item When there is a silent epoch, $c_t$ increases multiplicatively by a factor of $\threshold^{1/4}$, causing an increase to $\mbox{logC}(t)$ of exactly $\threshold$.  Changes to $\pmin$ only decrease the potential. By \lemref{activation}, the packet activations caused by silent (error) epoch increase $\Phi(t)$ by at most $2$.  
\end{itemize}
Thus the overall increase to $\Phi(t)$ caused by an error epoch is at most $\threshold + 2$.

\end{proof}

\begin{lemma}
Consider a given non-error epoch of length $\ell$, during which $i$ packets are inserted by the adversary. If any of the following conditions hold: (i) $\Phi(t) > 6\threshold$ or (ii) $\pmin < 1/\sqrt{\threshold}$ or (iii) contention $\geq \threshold^{1/4}$, then $\Phi(t)$ decreases by at least $\ell (1 - 1 /\threshold) - i(1 + 5 / \ln \threshold)$. Otherwise, $\Phi(t)$ decreases by at least $\ell (1 - 1/\threshold) - i(1 + 5 / \ln \threshold) - 2$.
\label{lem:non-error_good}
\end{lemma}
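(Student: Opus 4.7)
My approach is a case split on the type of non-error epoch. By the definition of an error epoch, a non-error epoch is exactly one of: (1) a successful epoch of length $\ell \le \threshold$, (2) an overfull epoch (length $\ell = \threshold$) with $c_t > \threshold^{3/4}$, or (3) a silent epoch (length $\ell = 1$) with $c_t < \threshold^{1/4}$. In each case I decompose the change of $\Phi(t)$ into three contributions: the $i$ packet arrivals (contributing exactly $+i(1+5/\ln\threshold)$ by \lemref{arrivals}), the algorithmic multiplicative-weight update that closes the epoch, and any packet activations triggered by a silent slot.

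The successful case follows immediately from \lemref{success}, which yields a decrease of at least $\ell - i(1+5/\ln\threshold) \ge \ell(1-1/\threshold) - i(1+5/\ln\threshold)$; no silent slot occurs, so activations are irrelevant and the strong bound holds. For the overfull non-error case the update divides each active $p_j$ by $\threshold^{1/4}$. Since $c_t > \threshold^{3/4}$, the new contention $c_t/\threshold^{1/4} > \sqrt{\threshold} = c_*$, so $\mbox{logC}(t)$ drops by exactly $4\threshold \log_\threshold(\threshold^{1/4}) = \threshold = \ell$ while $s(t)$ rises by exactly $1$. Combining with arrivals gives a net decrease of $\ell - 1 - i(1+5/\ln\threshold) = \ell(1 - 1/\threshold) - i(1+5/\ln\threshold)$, and condition (iii) is in force since $c_t > \threshold^{3/4} > \threshold^{1/4}$, so the strong bound applies.

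The silent non-error case is the delicate one. Multiplying every active $p_j$ by $\threshold^{1/4}$ keeps $c_t$ strictly below $c_*$ (since $c_t \cdot \threshold^{1/4} < \sqrt{\threshold}$), so $\mbox{logC}(t)$ remains $0$ and unchanged, while $\pmin(t)$ rises by the factor $\threshold^{1/4}$ (provided some active packet exists whose $p_j$ is not at the cap), causing $s(t)$ to drop by $1$. For the activation contribution I invoke \lemref{activation}: under condition (i) $\Phi(t)>6\threshold$ the activations cannot increase $\Phi(t)$; under condition (ii) $\pmin(t) < 1/\sqrt{\threshold}$, the hypothesis ``$\pmin > 1/\sqrt{\threshold}$'' required by \lemref{activation} for an increase to occur is violated, so again activations do not raise $\Phi(t)$. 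The combined decrease is at least $1 - i(1+5/\ln\threshold) \ge \ell(1-1/\threshold) - i(1+5/\ln\threshold)$. Condition (iii) cannot hold here because the epoch demands $c_t < \threshold^{1/4}$. In the ``otherwise'' regime where none of (i)--(iii) holds, \lemref{activation} still bounds the activation-induced increase by strictly less than $2$, which is exactly what the $-2$ slack in the weaker bound is designed to absorb.

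The main obstacle is the degenerate silent subcase in which there are no active packets at the start of the epoch: the update step then contributes nothing to $s(t)$, so the promised $-1$ must instead come from the activations themselves. When condition (i) holds in this regime, $M_t(1+5/\ln\threshold) = \Phi(t) > 6\threshold$ forces $M_t = \Omega(\threshold)$; activating all $M_t$ inactive packets at once drops $u(t)$ by $5M_t/\ln\threshold = \Omega(\threshold/\ln\threshold)$, which overwhelms both the $\le 2$ rise in $s(t)$ and the at most $4\threshold \log_\threshold(M_t/\threshold)$ rise in $\mbox{logC}(t)$, yielding a net decrease that dwarfs $1 - 1/\threshold$. This subcase has to be handled by unpacking the proof of \lemref{activation} rather than invoking only the nonpositive change that the statement certifies. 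In the ``otherwise'' regime without active packets, $M_t$ is bounded by $6\threshold/(1+5/\ln\threshold)$, and the activation-induced increase remains strictly less than $2$, which the slack absorbs. Once these edge cases are handled, the remaining arithmetic is routine.
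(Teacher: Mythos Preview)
Your proof follows essentially the same route as the paper's: a case split by epoch type (successful / overfull with $c_t>\threshold^{3/4}$ / silent with $c_t<\threshold^{1/4}$), decomposing the change in $\Phi$ into arrivals (via \lemref{arrivals}), activations (via \lemref{activation}), and the multiplicative-weight update, with \lemref{success} disposing of the successful case. The paper's own argument is terser---in its Case~2 it simply asserts that $s(t)$ drops by $1$ in a silent epoch---and so glosses over the no-active-packets subcase that you correctly flag and attempt to handle.

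One small gap remains in your treatment of that degenerate subcase in the ``otherwise'' regime. The $-2$ slack in the weaker bound is calibrated to absorb an activation-induced increase of $<2$ \emph{on top of} an algorithmic decrease of $\ell(1-1/\threshold)$ from the update step. When there are no active packets the update contributes nothing, so to meet the stated bound you would need the activation increase to be at most $1+1/\threshold$, not merely $<2$; the bound $<2$ from \lemref{activation} (which in its proof is really $2-r/\ln\threshold$) does not get you there. The paper's proof has the same oversight, and the discrepancy is inconsequential downstream (only the strong bound is invoked in \lemref{throughput} and \lemref{decrease}), but your phrase ``which the slack absorbs'' is not quite justified as written.
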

\john{$(1 - 1/\threshold)$ could be replaced with $\min(\ell, \threshold - 1)$, which would be a slightly stronger (and still correct) bound.}

\begin{proof}
Note that if the epoch is a successful epoch, we can immediately apply \lemref{success} to achieve the desired result. Thus we only need to consider overfull and silent epochs.  Notably, if contention $\geq \threshold^{1/4}$ and $\leq \threshold^{3/4}$, then the epoch is successful (because it is a non-error epoch).

By \lemref{arrivals}, we know that the $i$ packet arrivals will increase $\Phi(t)$ by exactly $i(1 + 5/\ln \threshold)$. By  \lemref{activation}, we know that, if $\Phi(t) \geq 6\threshold$ or $\pmin \leq 1/\sqrt{\threshold}$, any packet activations that occur during this epoch do not increase $\Phi(t)$.  If contention $\geq \threshold^{1/4}$, then the epoch is not silent (because it is a non-error epoch) and so no packets are activated.  Otherwise, the packet activations increase $\Phi(t)$ by at most $2$.

Thus, it is sufficient to show that, setting aside packet arrivals and activations, $\Phi(t)$ decreases by at least $\ell(1 - 1/\threshold)$ over this epoch. We proceed by casework, considering high contention, low contention, and contention near $c_* = \sqrt{\threshold}$.

\paragraph{Case 1: \boldmath$c_t > \kappa^{3/4}$.}

Since the epoch is not an error epoch, it cannot be a silent epoch. Thus we need only consider the case when it is an overfull epoch.

An overfull epoch is of length $\ell = \threshold$ by definition. Overfull epochs have no effect on the first and fourth terms in $\Phi(t)$. The epoch will cause a decrease to $\mbox{logC}(t)$ of $\threshold$ due to a multiplicative decrease in $c_t$ by a factor of $\threshold^{1/4}$. It will also increase $s(t)$ by $1$ due to $\pmin$ decreasing multiplicatively by a factor of $\threshold^{1/4}$.
Thus we have a net decrease in $\Phi$ of $\threshold - 1 = \ell(1 - 1 /\threshold)$.

\paragraph{Case 2: \boldmath$c_t < \kappa^{1/4}$.}

Since the epoch is not an error epoch, it cannot be an overfull epoch. We have already considered successful epochs, so we need only consider a silent epoch.

A silent epoch has no effect on the first and last terms in $\Phi(t)$, and the second term, $\mbox{logC}(t)$, will be $0$ both before and after this epoch due to contention remaining below $c_*$. Thus the only term affected by this epoch is $s(t)$, which will decrease by $1 = \ell$ as $\pmin$ increases multiplicatively by a factor of $\threshold^{1/4}$.

\paragraph{Case 3: \boldmath$\kappa^{1/4} \leq c_t \leq \kappa^{3/4}$.}

Because the epoch is not an error epoch, it must be a successful epoch. As noted earlier, we can then apply \lemref{success} to complete the proof.

\end{proof}

\subsection{Bounding the Potential}

\begin{lemma}
Suppose that in every window of size $w$, for any given $w \geq 16\threshold^2$, there are at most $\big(1 - \frac{5}{\ln \threshold}\big)w$ packet arrivals.  Then $\Phi(t) \leq 2w$, with probability at least $1 - 1/\poly{w}$.
\label{lem:throughput}
\end{lemma}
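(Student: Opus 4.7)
The plan is to argue by contradiction: assume that $\Phi(t^*) > 2w$ at some time $t^*$ and derive a contradiction holding with probability at least $1 - 1/\poly{w}$.

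First I would locate a suitable interval ending at $t^*$. Let $t_0$ be the most recent time $\leq t^*$ at which $\Phi(t_0) \leq w$; such a $t_0$ exists because the potential starts at $0$. Set $W = t^* - t_0$. By the choice of $t_0$ we have $\Phi(t') > w$ for every $t' \in (t_0, t^*]$, and the net change satisfies $\Phi(t^*) - \Phi(t_0) > w$. Since $w \geq 16 \threshold^2 > 6\threshold$, this gives $\Phi > 6\threshold$ throughout the interval, so the ``good case'' of \lemref{non-error_good} applies to every non-error epoch in $(t_0, t^*]$.

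Next I would tally contributions to $\Phi(t^*) - \Phi(t_0)$. Tiling $(t_0, t^*]$ with $w$-length windows and using the arrival hypothesis gives at most $(1 - 5/\ln\threshold)(W + w)$ arrivals in the interval, which by \lemref{arrivals} contribute at most $(W + w)(1 - 25/\ln^2 \threshold)$ to $\Phi$. By \lemref{intervalerror} applied with $t_2 = t^*$, the number of error epochs in the interval is $N_{err} \leq \sqrt{W} + cW/2^{\Theta(\threshold^{1/4})}$ with probability at least $1 - 1/\poly{w}$; by \lemref{error_ok} each raises $\Phi$ by at most $\threshold + 2$ beyond its arrival contribution and spans at most $\threshold$ slots. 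The remaining $W_{ne} \geq W - \threshold N_{err}$ slots lie in non-error epochs that collectively decrease $\Phi$ by at least $(1 - 1/\threshold) W_{ne}$ beyond the arrival contribution. Assembling these estimates yields a bound of the form
\[
\Phi(t^*) - \Phi(t_0) \;\leq\; w \;+\; W/\threshold \;+\; (2\threshold + 1)\, N_{err} \;-\; 25(W + w)/\ln^2 \threshold.
\]

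To finish, I would combine this inequality with the high-probability bound on $N_{err}$ and the hypothesis $w \geq 16 \threshold^2$ (which yields $\sqrt{w} \geq 4 \threshold$) to force the right-hand side to be $\leq w$, contradicting $\Phi(t^*) - \Phi(t_0) > w$. This last balancing is the main obstacle and is where the analysis is tight: a single silent error epoch costs $\Theta(\threshold)$ in $\Phi$ while the per-slot slack from the $5/\ln \threshold$ gap in the arrival rate is only $\Theta(1/\ln^2 \threshold)$, so the dominant positive term $(2\threshold + 1)\sqrt{W}$ must be absorbed by the negative term $25(W+w)/\ln^2 \threshold$; the calibration $w \geq 16\threshold^2$ is exactly what puts this comparison in the correct regime. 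The short-interval case $W < w$ is handled separately, using the single-window arrival bound $I \leq (1 - 5/\ln \threshold)w$ together with the trivial estimate $N_{err} \leq W$, both of which are small relative to the arrival slack $25 w/\ln^2 \threshold$ when $W$ is small.
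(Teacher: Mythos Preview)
Your approach is essentially the paper's: locate the last time the potential was small, then telescope using \lemref{arrivals}, \lemref{error_ok}, \lemref{non-error_good}, and the high-probability error-epoch count from \lemref{intervalerror} to bound $\Phi$ at the target time. The only substantive difference is the anchor: you take $t_0$ to be the last time $\Phi \leq w$, whereas the paper anchors at the last time $\Phi < 6\threshold$. Both choices force $\Phi > 6\threshold$ throughout the interval so that the clean case of \lemref{non-error_good} applies; the paper's choice is slightly tidier because the starting potential $6\threshold$ is negligible against $2w$, which lets the paper dispense with any case split on the interval length.

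On your short-interval case: the fallback ``trivial estimate $N_{err} \leq W$'' is too weak for intermediate $W$ (e.g.\ $W = w/2$), since then $(2\threshold+1)N_{err}$ can be $\Theta(\threshold w)$, which swamps the $25w/\ln^2\threshold$ slack rather than being absorbed by it. Fortunately the split is unnecessary. \lemref{intervalerror} is proved via a union bound over \emph{all} left endpoints $t_1 \leq t_2$, so its high-probability guarantee already covers whatever interval $(t_0,t^*]$ arises, short or long; you can invoke it directly without separating $W<w$ from $W\geq w$. With that correction your argument closes exactly as the paper's does.
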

\begin{proof}
Consider the last time before $t$ when $\Phi$ was less than  $6\threshold$. We label this time $t'$.
The interval $(t', t]$ is of length $t - t'$, so at most 
$\big\lceil \frac{t - t'}{ w} \big\rceil 
\big(1 - \frac{5}{\ln \threshold}\big) w < (t - t' + w) \big( 1 - \frac{5}{\ln \threshold}\big)$ packets arrive during interval $(t', t]$. 
Furthermore, $\Phi> 6\threshold$ over this interval.

Suppose that $j$ error epochs occur during $(t', t]$, for any arbitrary $t'$.  By \lemref{intervalerror}, we know that $j \leq \sqrt{w} + c(t_2 - t_1) / 2^{\Theta(\threshold^{1/4})}$, for a fixed constant $c$, with probability $\geq 1 - 1/ \poly{w}$. For the remainder of the proof, we assume that this event occurs.

We observe that at least $(t - t') - j\threshold - \threshold$ time was spent in non-error epochs that complete by time $t$. (Note that there is potentially one incomplete epoch at the end, which has at most $\kappa$ slots.)

Let $\ell$ be the number of slots in completed non-error epochs in the interval $(t', t]$.  Let $i$ be the number of packets that are injected during that interval.  Applying  \lemreftwo{error_ok}{non-error_good}, it follows that:
\begin{align*}
\Phi(t) &\,\leq\, \Phi(t') - \ell\left(1 - \frac{1}{\threshold}\right) + i\left(1 + \frac{5}{\ln \threshold}\right) + j(\threshold + 2)\\
        &\,\leq\, 6\threshold - \big(t - t' - (j+1)\threshold\big)\left(1 - \frac{1}{\threshold}\right) + \\ 
        & \qquad\qquad\qquad (t - t' + w) \left(1 - \frac{5}{\ln \threshold}\right) \left(1 + \frac{5}{\ln \threshold}\right) + j(\threshold + 2) \\
        &\,\leq\, 7\threshold - (t - t')\left(1 - \frac{1}{\threshold}\right) + (t - t' + w)\left(1 - \frac{25}{\ln^2\threshold}\right) + 2j(\threshold + 2) \\
        &\,\leq\, 7\threshold + w - (t - t')\left(\frac{24}{\ln^2 \threshold}\right) + 2j(\threshold + 2)\,.
\end{align*}

Thus, if $\Phi(t) \geq 2w$ then:
\begin{align*}
    2j(\threshold + 2) \geq w - 7\threshold + (t - t')\left(\frac{24}{\ln^2 \threshold}\right) 
\end{align*}
and hence:
\begin{align*}
    j & > w/(\threshold+2) - 7 +  (t - t')\left(\frac{12}{\threshold \ln^2 \threshold}\right) \\
      & \geq \sqrt{w} + (t - t')\left(\frac{12}{\threshold \ln^2 \threshold}\right).
\end{align*}
This, however, contradicts our assertion that $j \leq \sqrt{w} + c(t_2 - t_1) / 2^{\Theta(\threshold^{1/4})}$, for a fixed constant $c$, implying that $\Phi(t) < 2w$.
\end{proof}

\begin{theorem}
Suppose that in every window of size $w$, for any given $w \geq 16\threshold^2$, there are at most $\big(1 - \frac{5}{\ln \threshold}\big)w$ packet arrivals.
Then at any arbitrary time $t$, the number of packets in the system is at most~$2w$ with probability at least $1 - 1/\poly{w}$.
\label{thm:throughput}
\end{theorem}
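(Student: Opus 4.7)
The plan is to derive this theorem as an essentially immediate consequence of Lemma~\ref{lem:throughput}, which under the same hypothesis already establishes that $\Phi(t) \leq 2w$ with probability at least $1 - 1/\poly{w}$. The potential function has been carefully constructed so that $N_t$ (the total number of packets in the system, active plus inactive) appears as its first summand, and each of the remaining three summands $\mbox{logC}(t)$, $s(t)$, and $u(t)$ is manifestly non-negative (the $\mbox{logC}(t)$ term is non-negative by the outer $\max\{0,\cdot\}$, and $s(t), u(t)$ are non-negative since $\pmin(t)\leq 1$ and $M_t \geq 0$). Consequently, at every time $t$ we have the pointwise bound $N_t \leq \Phi(t)$.

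With that observation in hand, the proof is one line: fix an arbitrary time $t$ and consider the event $E$ that $\Phi(t) \leq 2w$. By Lemma~\ref{lem:throughput}, $\Pr[E] \geq 1 - 1/\poly{w}$; and on $E$ we have $N_t \leq \Phi(t) \leq 2w$. Therefore $N_t \leq 2w$ with probability at least $1 - 1/\poly{w}$, which is exactly the conclusion of the theorem. The bound on $w$ and on the arrival rate are inherited verbatim from the hypothesis of Lemma~\ref{lem:throughput}, so no additional assumptions need to be introduced.

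There is no real obstacle here, since all the work has been done: the heavy lifting lies in Lemma~\ref{lem:throughput} (which itself relies on the per-epoch drift bounds of Lemmas~\ref{lem:error_ok} and~\ref{lem:non-error_good} and on the high-probability bound on the number of error epochs from Lemma~\ref{lem:intervalerror}). The only mild subtlety worth explicitly flagging in the write-up is that the probability statement holds for \emph{any} fixed $t$, rather than being a union bound over an entire execution; this matches exactly what Lemma~\ref{lem:throughput} delivers and is precisely what the theorem statement asks for (``at any arbitrary time $t$''). No additional union bound or independence argument is needed.
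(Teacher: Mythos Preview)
Your proposal is correct and matches the paper's own proof essentially line for line: the paper likewise invokes \lemref{throughput} to get $\Phi(t) \leq 2w$ with high probability and then observes that the first term of $\Phi$ is $N_t$, hence $N_t \leq 2w$. There is nothing to add.
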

\begin{proof}
    By \lemref{throughput}, we know that $\Phi(t) < 2w$, with probability at least $1 - 1/\poly{w}$.  Since the first term of the potential is the number of packets in the system, this immediately implies that that are at most $2w$ packets in the system.
\end{proof}



\section{Latency Analysis}
\label{sec:latency}

In this section, we consider the maximum latency of a packet.  Under the assumption that the adversary can inject $(1 - o(1))w$ packets in a window of size $w$, it is unavoidable that the worst-case latency of a packet will be $\Omega(w)$: if the adversary injects $w$ packets in a slot, and the maximum capacity is one packet per slot, then at least half the packets will take at least $w/2$ slots.  Our goal in this section is to show that the worst-case throughput is $O(\sqrt{\threshold} w \ln^3 w)$.  We also consider the batch case and show that if $n$ packets arrive simultaneously, they will all be delivered by time $n + O(1)$.

\subsection{Packet Latency Analysis}

We begin by defining a \defn{sparse system} event that occurs when there are very few active packets in the system.  We will show that whenever a sparse system event occurs, each packet has a reasonable chance of success.
\begin{definition}
A \defn{sparse system} event occurs in slot $t$ if: (i)~$\Phi(t) \leq 6\threshold$; 
(ii)~contention $< \threshold^{1/4}$, 
(iii)~$\pmin \geq 1/\sqrt{\threshold}$.
\end{definition}

When a sparse system event occurs, it means that every active packet was joining epochs with a relatively high probability; after the epoch completes, any new packets that activated have lowered $\pmin$.  Notice that when a sparse system event does not occur, then the potential decreases (aside from new packets being injected): \lemref{non-error_good} shows that potential decreases in non-error epochs, and \lemref{error_unlikely} shows that error epochs are unlikely.

We first show that no packet observes too many completed sparse system events before it is delivered.
\begin{lemma}
\label{lem:howmanysparse}
A given packet $p$ remains active in the system for at most $O(\sqrt{\threshold}\ln{w})$ sparse system events, with probability at least $1 - 1/\poly{w}$.
\end{lemma}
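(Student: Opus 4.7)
The strategy has two steps: (i) establish that in each sparse system event during which $p$ is active, $p$ is delivered with probability at least $q = \Omega(1/\sqrt{\threshold})$; and (ii) apply a geometric tail bound to conclude that only $O(\sqrt{\threshold}\ln w)$ such events can occur before $p$ is delivered.

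For step~(i), fix a slot $t$ at which a sparse system event occurs and $p$ is active. By the definition of a sparse event, $p_p \geq \pmin(t) \geq 1/\sqrt{\threshold}$ and $c_t < \threshold^{1/4}$. In the epoch that begins at slot $t$, $p$ joins with probability $p_p \geq 1/\sqrt{\threshold}$. Conditioned on $p$ joining, the other active packets join independently, with total expected count at most $c_t - p_p < \threshold^{1/4}$. A Markov bound then shows that at least $\threshold$ other packets join with probability at most $\threshold^{1/4}/\threshold = \threshold^{-3/4} = o(1)$. Hence, by \lemref{decodingwindow}, the epoch is successful and $p$ is delivered with probability at least
\[
q \,:=\, \frac{1 - o(1)}{\sqrt{\threshold}} \,=\, \Omega\!\left(\frac{1}{\sqrt{\threshold}}\right).
\]

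For step~(ii), we note that the bound $q$ holds \emph{conditional} on the entire history up to slot $t$: the sparse condition pins down both $p_p$ and $c_t$ at the moment the next epoch begins, and the joining coin flips for that epoch are independent of the past given these parameters. Consequently, the number of active sparse events observed by $p$ before delivery is stochastically dominated by a geometric random variable with parameter $q$. Choosing $k = c\sqrt{\threshold}\ln w$ for a sufficiently large constant $c$ then yields
\[
\Pr[\text{$p$ survives $k$ sparse events}] \,\leq\, (1-q)^k \,\leq\, e^{-qk} \,\leq\, 1/\poly{w}.
\]

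The main obstacle is the conditional-independence argument in step~(ii): sparse events occur at history-dependent random times, so we must verify that the per-event success bound holds conditional on the past, not merely on average. Once this is in hand, the geometric domination and tail bound are routine, and one then extracts the stated $O(\sqrt{\threshold}\ln w)$ bound by solving $qk \geq \Theta(\ln w)$.
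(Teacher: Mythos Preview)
Your proposal is correct and follows the same two-step structure as the paper: establish a per-event success probability of order $1/\sqrt{\threshold}$, then apply a geometric tail bound.

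The one substantive difference is in step~(i). The paper observes that the sparse conditions $\pmin(t) \geq 1/\sqrt{\threshold}$ and $c_t < \threshold^{1/4}$ \emph{deterministically} bound the number of active packets by $c_t/\pmin(t) < \threshold^{3/4} \leq \threshold$; hence the epoch is \emph{guaranteed} successful regardless of which packets join, and $p$ is delivered with probability exactly $p_p \geq 1/\sqrt{\threshold}$. You instead bound the number of \emph{joining} packets probabilistically via Markov's inequality. This works, but it is more effort for a weaker conclusion (you pick up an unnecessary $1 - o(1)$ factor), whereas the paper's deterministic count sidesteps the randomness entirely. On the other hand, your explicit discussion of conditional independence in step~(ii) is more careful than the paper, which simply asserts the product bound $(1 - 1/\sqrt{\threshold})^{\Theta(\sqrt{\threshold}\ln w)}$ without comment.
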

\begin{proof}
    When a sparse system event occurs, two facts are true: (i) the probability that packet $p$ joins that epoch is $\geq 1/\sqrt{\threshold}$ because of the condition on $\pmin$; (ii) the number of active packets is $\leq \threshold$.  The latter fact follows because if there were more than $\threshold$ active packets, then the total contention would be at least $\threshold \cdot 1/\sqrt{\threshold} \geq \sqrt{\threshold}$, contradicting the condition on contention.
    
    Since there are at most $\threshold$ active packets, then all the packets that join the epoch will be successfully delivered. Since $p$ (once active) joins with probability $1/\sqrt{\threshold}$, we conclude that $p$ had a $1/\sqrt{\threshold}$ probability of being successfully delivered.  The probability that $p$ fails to be delivered after $\Theta(\sqrt{\threshold}\ln{w})$ sparse system events (while it is active) is at most $(1 - 1/\sqrt{\threshold})^{\Theta(\sqrt{\threshold}\ln{w})} \leq 1/\poly{w}$. 
\end{proof}

We now argue that, with high probability, a sparse system event occurs every $O(w \ln w)$ slots.  The first step is to show that if the potential is some value $\alpha$ at a time $t$, then a sparse system event occurs within time $O((\alpha + w) \ln^2 w)$.
\begin{lemma}
Suppose that in every window of size $w$, for any given $w \geq 16\threshold^2$, there are at most $\big(1 - \frac{5}{\ln \threshold}\big)w$ packet arrivals.  Fix a time $t$ and define $\alpha = \Phi(t)$.  Then, with high probability in $w$, by time $t + 1.1(\alpha + w)\ln^2 w)$, either all packets active at time $t$ are delivered or a sparse system event occurs.  
\label{lem:decrease}
\end{lemma}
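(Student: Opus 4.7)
The plan is to argue by contradiction. Fix $T = 1.1(\alpha+w)\ln^2 w$ and assume no sparse system event occurs anywhere in $(t, t+T]$. I will show this forces $\Phi(t+T) < 0$ with probability $1 - 1/\poly{w}$, contradicting $\Phi \geq 0$ and therefore forcing a sparse system event in $(t, t+T]$ (which in particular implies the lemma's disjunction, since if $\Phi$ ever hits $0$ then all packets active at time $t$ have been delivered).

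The central observation I will leverage is that the negation of the sparse-system-event conditions in a slot is exactly the hypothesis of \lemref{non-error_good}: $\Phi > 6\threshold$ OR $\pmin < 1/\sqrt{\threshold}$ OR contention $\geq \threshold^{1/4}$. Hence, under the standing assumption, every non-error epoch $E$ of length $\ell_E$ with $i_E$ arrivals in $(t, t+T]$ decreases $\Phi$ by at least $\ell_E(1 - 1/\threshold) - i_E(1 + 5/\ln\threshold)$.

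I will then accumulate contributions across $(t, t+T]$. \lemref{intervalerror} bounds the error-epoch count by $j \leq \sqrt{T} + cT/2^{\Theta(\threshold^{1/4})}$ with probability $1 - 1/\poly{w}$; \lemref{error_ok} bounds each error epoch's contribution by $\threshold + 2$ (excluding its own arrivals); the arrival hypothesis caps total arrivals in the interval by $(T+w)(1 - 5/\ln\threshold)$, each contributing at most $1 + 5/\ln\threshold$ (\lemref{arrivals}); and the total length of completed non-error epochs is at least $T - (j+1)\threshold$. Combining these with the identity $(1-5/\ln\threshold)(1+5/\ln\threshold) = 1 - 25/\ln^2\threshold$ yields
\[
    \Phi(t+T) \,\leq\, \alpha \,-\, \bigl(T - (j+1)\threshold\bigr)\!\left(1 - \tfrac{1}{\threshold}\right) \,+\, (T+w)\!\left(1 - \tfrac{25}{\ln^2\threshold}\right) \,+\, j(\threshold+2) \,+\, O(\threshold).
\]
After simplification, the dominant negative term is $-25(T+w)/\ln^2\threshold \leq -25T/\ln^2 w$, using $\ln\threshold \leq \ln w$ (from $w \geq 16\threshold^2$). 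Substituting $T = 1.1(\alpha+w)\ln^2 w$ gives this term magnitude $\approx 27.5(\alpha+w)$, beating $\alpha + w$ with constant-factor slack.

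The hard part will be controlling the error-epoch contribution $j(\threshold+2)$, the slack $T/\threshold$ coming from the $(1 - 1/\threshold)$ factor, and the incomplete-tail epoch, all of which must fit within the $\Theta(\alpha+w)$ gap. The hypothesis $w \geq 16\threshold^2$ is exactly what is needed: it forces $T \geq \Omega(\threshold^2 \ln^2 w)$, so that the $O(\sqrt{T}\cdot\threshold)$ error-epoch mass is dominated by the linear-in-$T$ decrease. The prefactor $1.1$ in the definition of $T$ is chosen to make these lower-order terms fit comfortably inside the slack.
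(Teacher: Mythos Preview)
Your proposal is correct and follows essentially the same route as the paper's proof: assume no sparse system event, apply \lemref{non-error_good} in its stronger branch across all non-error epochs, bound error epochs via \lemref{intervalerror}, bound arrivals via the window assumption, and show the resulting upper bound on $\Phi$ is negative. Your explicit observation that the negation of the sparse-system-event conditions coincides exactly with the disjunctive hypothesis of \lemref{non-error_good} is precisely what the paper relies on implicitly; the only cosmetic difference is that the paper rounds the interval length down to a multiple of $w$ (absorbing that slack into the $1.1$ prefactor), whereas you keep $T$ as is and carry the extra $+w$ in the arrival bound.
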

\begin{proof}
Let $t_1 = t$ and let $t_2 = t_1 + qw$ be the largest number that is congruent with $t_1 \pmod{w}$ and smaller than $t_1 + 1.1(\alpha + w)\ln^2 w$.  It is necessarily true then that $t_2 - t_1$ is a multiple of $w$ satisfying $t_2 - t_1 \geq t + (\alpha + w)\ln^2 w$ (i.e., relaxing the constant in the inequality from $1.1$ to $1$ allows us to take $t_2 - t_1$ to be a multiple of $w$).  Assume, for the sake of contradiction, that no sparse system event occurs during the interval $(t_1, t_2]$.  We will show that all the packets active at time $t_1$ are delivered with high probability.

Suppose that $j$ error epochs occur during $(t_1, t_2]$. By \lemref{intervalerror}, we know that $j \leq \sqrt{w} + 6(t_2 - t_1) / 2^{\Theta(\threshold^{1/4})}$, with probability $\geq 1 - 1/ \poly{w}$. For the remainder of the proof, we assume that this event occurs.

We observe that at least $(t_2 - t_1) - j\threshold - \threshold$ time was spent in non-error epochs that complete by time $t_2$. (Note that there is potentially one incomplete epoch at the end, which has at most $\kappa$ slots.)

We also observe that the number of packets that are injected in the interval $(t_1, t_2]$ is at most:
$\big\lceil \frac{t - t'}{w} \big\rceil 
\big(1 - \frac{5}{\ln \threshold}\big) w < (t_2 - t_1) \big( 1 - \frac{5}{\ln \threshold}\big)$. 

We can now bound the potential at time $t_2$.  At slot $t_1$, $\Phi(t_1) = \alpha$; by \lemref{non-error_good} that the potential decreases by $(1 - 1/\threshold)$ in each slot that was not part of an error epoch and increases $(1 = 5/\ln \threshold)$ for each injected packet; by \lemref{error_ok}, the potential increases at most $\threshold + 2)$ for each error epoch.  We use this to bound how long until the potential reaches zero. 

Let $\ell$ be the number of slots in completed non-error epochs in the interval $(t_1, t_2]$.  Let $i$ be the number of packets that are injected during that interval.  Applying  \lemreftwo{error_ok}{non-error_good}, it follows that:
\begin{align*}
\Phi(t_2) &\,\leq\, \Phi(t_1) - \ell\left(1 - \frac{1}{\threshold}\right) + i\left(1 + \frac{5}{\ln \threshold}\right) + j(\threshold + 2)\\
        &\,\leq\, \alpha - \big(t_2 - t_1 - (j+1)\threshold\big)\left(1 - \frac{1}{\threshold}\right) + \\ 
        & \qquad\qquad\qquad (t_2 - t_1) \left(1 - \frac{5}{\ln \threshold}\right) \left(1 + \frac{5}{\ln \threshold}\right) + j(\threshold + 2) \\
        &\,\leq\, \alpha - (t_2 - t_1)\left(1 - \frac{1}{\threshold}\right) + (t_2 - t_1)\left(1 - \frac{25}{\ln^2\threshold}\right) + (2j+1)(\threshold + 2) \\
        &\,\leq\, \alpha - (t_2 - t_1)\left(\frac{24}{\ln^2 \threshold}\right) + 6j\threshold \\
        &\,\leq\, \alpha - (t_2 - t_1)\left(\frac{24}{\ln^2 \threshold}\right) + 6\sqrt{w}\threshold + (t_2 - t_1)\frac{36\threshold}{2^{\Theta(\threshold^{1/4})}} \\
        &\,\leq\, (\alpha + 6w) - (t_2 - t_1)\left(\frac{18}{\ln^2 \threshold}\right)\,.
\end{align*}
Thus, since $(t_2 - t_1) \geq (\alpha + w) \ln^2 w \geq (\alpha + w) \ln^2 \threshold$, we can conclude that the potential has dropped below zero, implying that all the packets in the system have been delivered.
\end{proof}

A corollary of the previous lemma is that a sparse system event occurs in every interval of size $O(w \ln^2 w)$, with high probability.
\begin{corollary}
\label{cor:sparsewindow}
Suppose that in every window of size $w$, for any given $w \geq 16\threshold^2$, there are at most $\big(1 - \frac{5}{\ln \threshold}\big)w$ packet arrivals.  For any interval of time $(t_1, t_2)$ where $(t_2 - t_1) > 4w \ln^2 w$, then either every packet active at time $t_1$ is delivered by time $t_2$, or a sparse system event occurs in that interval, with probability $1 - 1/\poly{w}$.
\end{corollary}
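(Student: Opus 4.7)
The plan is to combine the throughput bound from \lemref{throughput} with the decrease bound from \lemref{decrease}, observing that $4w \ln^2 w$ is large enough to accommodate both.

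First, I would apply \lemref{throughput} at the beginning of the interval: with probability at least $1 - 1/\poly{w}$, we have $\Phi(t_1) \leq 2w$. This is the only place we need the assumption $w \geq 16\threshold^2$ together with the arrival-rate hypothesis; we simply inherit it from the earlier analysis.

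Next, I would set $\alpha = 2w$ and invoke \lemref{decrease} starting at time $t_1$. That lemma guarantees that, with probability at least $1 - 1/\poly{w}$, by time $t_1 + 1.1(\alpha + w)\ln^2 w = t_1 + 1.1 \cdot 3w \ln^2 w = t_1 + 3.3 w \ln^2 w$, either every packet active at time $t_1$ has been delivered or a sparse system event has occurred. Since by hypothesis $t_2 - t_1 > 4 w \ln^2 w > 3.3 w \ln^2 w$, the guaranteed event occurs within the interval $(t_1, t_2)$.

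Finally, I would take a union bound over the two high-probability failure events (the potential bound at $t_1$ and the decrease-and-sparse guarantee within $3.3 w \ln^2 w$ slots). Both bad events have probability $1/\poly{w}$, so the corollary holds with probability $1 - 1/\poly{w}$. I do not expect any real obstacle: the main work is already carried out in \lemreftwo{throughput}{decrease}, and the corollary is essentially a straightforward instantiation of the latter with the worst-case potential bound from the former, chosen so that the constant $1.1 \cdot 3 = 3.3$ fits comfortably inside the stated constant $4$.
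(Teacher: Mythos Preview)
Your proposal is correct and follows essentially the same approach as the paper's proof: apply \lemref{throughput} to bound $\Phi(t_1)\le 2w$, then invoke \lemref{decrease} with $\alpha\le 2w$ so that $1.1(\alpha+w)\ln^2 w\le 3.3\,w\ln^2 w<4w\ln^2 w\le t_2-t_1$, and union-bound the two failure events. The paper's argument is identical, only more tersely stated.
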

\begin{proof}
From \lemref{throughput}, we know that at time $t_1$, $\Phi(t) \leq 2w$ with probability at least $1 - \poly{w}$.  From \lemref{decrease}, we know that by time $t_1 + 1.1(\alpha + w) \ln^2 w \leq t_1 + 4w \ln^2 w \leq t_2$, either all the packets active at time $t_1$ are delivered or a sparse system event occurs, with probability at least $1 - \poly{w}$. 
\end{proof}

We can conclude from the preceding lemmas that, with high probability, every packet is delivered in $O(\sqrt{\threshold} w \ln^3 w)$ time.
\begin{theorem}
\label{thm:latency}
Suppose that in every window of size $w$, for any given $w \geq 16\threshold^2$, there are at most $\big(1 - \frac{5}{\ln \threshold}\big)w$ packet arrivals. Then any given packet is delivered within time $O(w \sqrt{\threshold} \ln^3 w)$ with probability $1 - 1/\poly{w}$.
\end{theorem}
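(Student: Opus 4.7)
My plan is to combine Lemma \ref{lem:howmanysparse}, which upper-bounds the number of sparse system events that can pass while $p$ is still awaiting delivery, with Corollary \ref{cor:sparsewindow}, which guarantees that a sparse system event occurs in every interval of length $\Theta(w \ln^2 w)$. Multiplying the two bounds yields exactly the advertised latency of $O(w\sqrt{\threshold} \ln^3 w)$.

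In more detail, fix the target packet $p$ and let $t_0$ be its arrival time. I would first argue that $p$ is activated within $O(w \ln^2 w)$ slots of $t_0$. By Corollary \ref{cor:sparsewindow} applied to the interval $(t_0, t_0 + 4w \ln^2 w]$, either every packet active at $t_0$ has been delivered (in which case contention is carried only by packets that arrived after $t_0$, so within a short amount of additional time a silent epoch must occur, activating $p$), or a sparse system event occurs inside this interval; in the latter case the conditions $c_t < \threshold^{1/4}$ and $\Phi(t) \leq 6\threshold$ ensure that the system is already in a regime where silent epochs happen within $O(w\ln^2 w)$ further slots, again activating $p$. After absorbing this $O(w \ln^2 w)$ additive term into the final bound, we may assume that $p$ is active starting at some time $t_1 \leq t_0 + O(w \ln^2 w)$.

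Next, partition the window $(t_1, t_1 + T]$ for $T = c \sqrt{\threshold}\, w \ln^3 w$ (with $c$ a suitable constant) into $\Theta(\sqrt{\threshold} \ln w)$ consecutive sub-intervals, each of length $4w \ln^2 w$. By Corollary \ref{cor:sparsewindow} applied to each sub-interval, either $p$ has already been delivered or a sparse system event occurs in that sub-interval. Thus, if $p$ has not been delivered by $t_1 + T$, then at least $\Theta(\sqrt{\threshold} \ln w)$ distinct sparse events have occurred while $p$ was active. Lemma \ref{lem:howmanysparse} says that with probability $1 - 1/\poly{w}$ this cannot happen, so choosing $c$ large enough to exceed the lemma's constant produces a contradiction: $p$ must be delivered by $t_1 + T$. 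A union bound over the $\Theta(\sqrt{\threshold} \ln w)$ applications of Corollary \ref{cor:sparsewindow} and the single application of Lemma \ref{lem:howmanysparse} leaves the total failure probability at $1/\poly{w}$.

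I expect the main obstacle to be the first step --- verifying rigorously that the inactive waiting time of $p$ is $O(w \ln^2 w)$ with high probability, since the sparse-system-event definition does not by itself force a silent slot. Making this airtight will require tracking how contention evolves immediately after a sparse event, or, alternatively, proving a mild strengthening of Lemma \ref{lem:howmanysparse} that counts sparse events from the arrival of $p$ rather than from its activation. Once that subtlety is handled, the remainder is a mechanical combination of the two earlier results.
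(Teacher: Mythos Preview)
Your proposal follows essentially the same route as the paper: first bound the time until $p$ activates, then combine \lemref{howmanysparse} and \corref{sparsewindow} across $\Theta(\sqrt{\threshold}\ln w)$ consecutive sub-intervals of length $\Theta(w\ln^2 w)$. The only place where you diverge is the activation step that you yourself flag as the obstacle.

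The paper handles that step more directly than the alternatives you sketch. Rather than tracking how contention evolves \emph{after} a sparse event, or strengthening \lemref{howmanysparse}, the paper argues that the epoch \emph{at} a sparse system event is itself silent with constant probability: since contention there is below $\threshold^{1/4}$, \lemref{error_unlikely} makes the epoch a non-error epoch with probability at least $1-2^{-\Theta(\threshold^{1/4})}\geq 1-1/e$, and at that contention a non-error epoch is taken to be silent, so $p$ activates. (In the other branch of \corref{sparsewindow}, where all active packets at the start of the window have been delivered, the very next epoch is silent because no packet can have activated in the interim without a prior silent slot.) Iterating \corref{sparsewindow} over $\Theta(\ln w)$ consecutive windows of length $\Theta(w\ln^2 w)$ therefore gives activation within $O(w\ln^3 w)$ slots with probability $1-1/\poly{w}$. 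This is looser than the $O(w\ln^2 w)$ activation bound you aim for, but it is already dominated by the final $O(w\sqrt{\threshold}\ln^3 w)$ bound, so no strengthening of \lemref{howmanysparse} is needed.
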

\begin{proof}
    Assume that packet $p$ is injected at time $t$.  We first need to consider how long until packet $p$ becomes active.  
    
    We now bound how long until a sparse system event occurs which leads to a non-error epoch; such an epoch will be silent, and hence packet $p$ will become active.  In every window of length $\Theta(w \ln^2 w)$, there is a sparse system event with probability $1 - \poly{w}$ or all active packets are delivered, by \corref{sparsewindow}.  If all active packets are delivered, then the next epoch is necessarily silent. If a sparse system event occurs, by \lemref{error_unlikely}, we know that the epoch comprising the sparse system event is an error epoch with probability $1 / 2^{\Theta(\threshold^{1/4})} \leq 1/e$; otherwise it is silent and $p$ becomes active.  Thus, the probability that there is no silent epoch (and hence packet $p$ remains inactive) for $\Theta(\ln w)$ intervals of length $\Theta(w \ln^2 w)$ is $\leq 1/\poly{w}$.
    
    Once $p$ becomes active, we know by \lemref{howmanysparse} that $p$ will be delivered within $O(\sqrt{\threshold} \ln w)$ sparse system events, with probability at least $1 - \poly{w}$.  We know by \corref{sparsewindow} that in every interval of length $O(w \ln^2 w)$, either all packets (including $p$) are delivered or there is a sparse system event, with probability at least $1 - 1/\poly{w}$.    Thus, within $O(w\sqrt{\threshold} \ln^3 w)$, packet $p$ is delivered with probability at least $1 - 1/\poly{w}$.
\end{proof}

\subsection{Batch Analysis}

\begin{theorem}
\label{thm:batch:latency}
Suppose that a batch of $n \neq 0$ packets arrives at time $0$, and the adversary does not inject any packets besides this initial batch. Then with high probability in $n$, all of the packets have successfully been delivered by time $n(1 + 10/\threshold) + O(\threshold)$.
\end{theorem}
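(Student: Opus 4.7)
The plan is to track the potential $\Phi(t)$ from $t = 0$ until the stopping time $T$ at which all packets are delivered. The key simplification in the batch setting is that no packets are injected for $t \geq 1$, so \lemref{non-error_good} and \lemref{error_ok} apply with $i = 0$ for every epoch after slot $1$.

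First I would analyze the initial silent epoch. At time $0$ all $n$ packets are inactive with $p_j = 0$, so $\Phi(0) = n(1 + 5/\ln\threshold)$. Since no packet broadcasts, slot $1$ is a silent epoch of length one; at its end every packet activates with $p_j = 1/\sqrt{\threshold}$, giving $c_1 = n/\sqrt{\threshold}$, $\pmin(1) = 1/\sqrt{\threshold}$, and $M_1 = 0$. Evaluating the four components of $\Phi$ yields
\[
\Phi(1) \;\leq\; n \;+\; \max\!\bigl\{0,\; 4\threshold \log_{\threshold}(n/\threshold)\bigr\} \;+\; 2 .
\]

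Next I would drive the potential to zero. Let $T_n$ be the number of post-time-$1$ slots lying in non-error epochs, let $J$ be the number of error epochs, and let $J_s$ be the number of non-error epochs in which none of conditions (i)--(iii) of \lemref{non-error_good} hold. Combining \lemref{non-error_good} (with $i=0$) and \lemref{error_ok} with $\Phi(T) \geq 0$ gives
\[
T_n(1 - 1/\threshold) \;\leq\; \Phi(1) \;+\; 2 J_s \;+\; J(\threshold + 2) .
\]
The epochs counted by $J_s$ satisfy $\pmin \geq 1/\sqrt{\threshold}$ and contention $< \threshold^{1/4}$, hence $N_t \leq \threshold^{3/4}$; since $N_t$ is monotonically non-increasing in the batch case and since all active packets share a common $p_j$ that is multiplied by $\threshold^{1/4}$ at each such silent epoch, a short tracking argument (following $p_j$ from $1/\sqrt{\threshold}$ up to $1$) shows $J_s = O(1)$. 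Applying \lemref{intervalerror} with $w = T$ yields $J \leq \sqrt{T} + 6T/2^{\Theta(\threshold^{1/4})}$ with probability $1 - 1/\poly{n}$, and $T_e \leq J\threshold$; solving the resulting inequality as a quadratic in $\sqrt{T}$ gives
\[
T \;\leq\; \Phi(1)(1 + 2/\threshold) \;+\; O(\threshold) .
\]

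The main obstacle is matching this bound to the target $n(1 + 10/\threshold) + O(\threshold)$. The only nontrivial term is $4\threshold \log_{\threshold}(n/\threshold)$, which I would control by case analysis: if $n \leq \threshold^{2}$ then $\log_{\threshold}(n/\threshold) \leq 1$ and so $4\threshold\log_{\threshold}(n/\threshold) \leq 4\threshold = O(\threshold)$; if $n \geq \threshold^{2}$ then setting $x := n/\threshold \geq \threshold$ gives $x/\log_{\threshold} x \geq \threshold$, which rearranges to $4\threshold\log_{\threshold} x \leq 4n/\threshold$. After multiplying through by $(1 + 2/\threshold)$, the resulting bound becomes $n + 2n/\threshold + 8n/\threshold + O(\threshold) = n(1 + 10/\threshold) + O(\threshold)$, matching the claim with high probability in $n$.
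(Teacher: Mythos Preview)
Your overall architecture---track $\Phi$ from the activation at slot $1$ down to zero, applying \lemref{non-error_good} and \lemref{error_ok} with $i=0$ in every subsequent epoch---is exactly the paper's route. Your explicit handling of the $J_s$ epochs and your case split on $n$ versus $\threshold^{2}$ to control $4\threshold\log_{\threshold}(n/\threshold)$ are in fact more careful than the paper: the paper records only $\Phi(1)\le n(1+5/\ln\threshold)+2$ and then asserts that $t=4\threshold+n(1+10/\threshold)$ drives $\Phi(t)\le 0$, without reconciling that $5/\ln\threshold \gg 10/\threshold$. Your bound $4\threshold\log_{\threshold}(n/\threshold)\le \max\{4\threshold,\,4n/\threshold\}$ is precisely what is needed to close that gap.

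There is, however, a genuine problem in your control of the error-epoch count $J$. Invoking \lemref{intervalerror} leaves you with an additive $\sqrt{T}$ in the bound on $J$, and feeding that back into the inequality for $T$ produces a term of order $\threshold\sqrt{T}$. Solving the quadratic $T \le A + B\sqrt{T}$ with $A=\Phi(1)(1+O(1/\threshold))$ and $B=\Theta(\threshold)$ gives
\[
T \;\le\; A \;+\; \Theta(\threshold)\sqrt{A} \;+\; \Theta(\threshold^{2}),
\]
not $A+O(\threshold)$ as you claim. The cross term $\Theta(\threshold\sqrt{n})$ dominates $10n/\threshold$ whenever $n=o(\threshold^{4})$, so your final step does not go through in that range. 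The paper avoids this by applying Chernoff directly (rather than through \lemref{intervalerror}) to obtain $j \le 2t/2^{\Theta(\threshold^{1/4})}$ with no additive $\sqrt{t}$ term; then $j(\threshold+2)$ is a vanishing multiple of $t$ and can be moved to the left side. Replacing your use of \lemref{intervalerror} by that direct Chernoff bound, while retaining your case analysis on $\Phi(1)$ and your $J_s=O(1)$ argument, gives a complete proof.
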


\begin{proof}
Assume $n$ packets arrive at time $0$, which is a silent epoch, and then all immediately activate. Thus at time $1$ we have 
\begin{align*}
\Phi(1) &= n + \max \left( 0, 4\threshold \log_\threshold \frac{n}{\threshold} \right) + 4\log_\threshold \frac{1}{1/\sqrt{\threshold}} \\
&\leq n(1 + 5/\ln \threshold) + 2
\end{align*}

We now consider $\Phi(t)$. At least $t - 1 - \threshold$ slots have been in non-error epochs since time $1$, and let $j$ be the number of error epochs. Then

\begin{align*}
\Phi(t) &\leq \Phi(1) - (t - 1 - \threshold)(1 - 1/\threshold) + j(\threshold + 2) \\
&\leq \Phi(1) - t(1 - 1/\threshold) + (j+1)(\threshold + 2).
\end{align*}

With high probability in $n$, we have, via an application of Chernoff Bounds, that $j \leq 2t / 2^{\Theta(\threshold^{1/4})}$, for any $t \geq n$. We assume this holds to obtain

\begin{align*}
\Phi(t) &\leq \Phi(1) - t(1 - 1/\threshold) + (1 + 2t/2^{\Theta(\threshold^{1/4})})(\threshold + 2).
\end{align*}

Taking $t = 4\threshold + n(1 + 10/\threshold)$, we have that $\Phi(t) \leq 0$. Thus there are, with high probability in $n$, no packets left in the system by slot $t = 4\threshold + n(1 + 10/\threshold)$, which is $O(n(1 + 1/\threshold))$.
\end{proof}


\begin{acks}
This research was supported by NSF grants
CCF-2118832,  
CCF-2106827, 
CSR-1763680,  
CCF-1716252, 
CNS-1938709, and  
CCF-1725543, 
as well as by Singapore MOE grant MOE2018-T2-1-160 (Beyond Worst-Case Analysis).
\end{acks}

\bibliographystyle{ACM-Reference-Format}
\balance
\bibliography{backoff,bender-bib}


\end{document}